\newtheorem{theorem}{Theorem}
\newtheorem{lemma}{Lemma}
\newtheorem{definition}{Definition}
\title{\LARGE \bf On the Power of Spatial Locality on Online Routing Problems}
\author{
Swapnil Guragain,
Gokarna Sharma\\
\affiliations
Kent State University\\
\emails
\{sguragai, gsharma2\}@kent.edu
}
\begin{document}

\maketitle

\begin{abstract}
We consider the online versions of two fundamental routing problems, traveling salesman (TSP) and dial-a-ride (DARP), which have a variety of relevant applications in logistics and robotics. The online versions of these problems concern with efficiently serving a sequence of requests presented in a real-time on-line
fashion located at points of a metric space by servers (salesmen/vehicles/robots). 
In this paper, motivated from real-world applications, such as Uber/Lyft rides, where some limited knowledge is available on the future requests, we propose the {\em spatial locality} model that provides in advance the distance within which new request(s) will be released from the current position of server(s). We study the usefulness of this advanced information on achieving the improved competitive ratios for both the problems with $k\geq 1$ servers, compared to the competitive results established in the literature without such spatial locality consideration. We show that small locality is indeed useful in obtaining improved competitive ratios irrespective of the metric space. 
\end{abstract}


\section{Introduction}
We consider the {\em online} versions of two fundamental routing problems, {\em traveling salesman} ({\sc Tsp})  and {\em dial-a-ride} ({\sc Darp}), which we denote as {\sc oTsp} and {\sc oDarp}. 
These online versions concern with efficiently serving a sequence of requests presented in a real-time online
fashion located at points of a metric space. Real-time online means that requests arrive at arbitrary times, not necessarily sequentially 
and not necessarily a single request at any time step\footnote{This differentiates this online model from the other online models where requests are assumed to arrive over time in a sequential order, i.e., next request arrives only after  current request has been served. Additionally, only a single request arrives at a time.}.  These kind of real-time online problems have a variety of relevant applications in logistics and robotics. 
Example applications include a set of salesmen/repairmen/vehicles/robots that have to serve locations on its workspace (e.g.,  Euclidean plane) and many other routing and scheduling
problems on a transportation network. 
In this paper, we refer to as {\em servers} the salesmen/repairmen/vehicles/robots. A request arrived at a location is considered {\em served} when a server reaches that location following a route. As the input is communicated to the servers in a real-time online fashion, the scheduled routes will have to be updated also in a
real-time online fashion during the trips of the servers. Since online execution is unaware of future requests, designing an optimal schedule (with respect to relevant performance metrics) is generally not possible and the goal is to reach as close to optimal as possible. 

In  {\sc oDarp}, there are $m$ (not known beforehand) ride requests arriving over time (hence online).  Each ride request consists of the corresponding source and destination points in the metric space that satisfies triangle inequality (which we call {\em arbitrary metric} throughout the paper). We also consider a special case in which the arbitrary metric space becomes a line, which we call {\em line metric}. A ride request is considered {\em served} if a server visits first the source point and then the destination point. There are $k\geq 1$ servers initially positioned at a {\em distinguished origin}, a unique point in the metric space known to servers. Each server travels with unit speed, meaning that it traverses unit distance in unit time. We consider the {\em uncapacitated} variant of {\sc oDarp} meaning that a server can serve simultaneously as many requests as possible. 
%
 {\sc oTsp} is similar to {\sc oDarp} where the source and destination points coincide, i.e., reaching to source point serves the request. In certain situations, the server(s) need to return to the distinguished origin after serving all the requests. If such a requirement, we refer as {\em homing} (or closed),   {\em nomadic} (or open) otherwise. 

In the literature,
the {\em offline} versions of both {\sc Tsp} and {\sc Darp} were extensively studied. Since complete input is known beforehand, the offline schedules are relatively easier to design. However, the offline versions cannot capture the real-world situation in which the complete input may not be available to the server(s) a priori. An online algorithm makes a decision on what to do based on the input (or requests) known so far (without knowledge on future requests). 
A bit formally, let $\sigma_t\subseteq \sigma$ denote a request sequence (or input) arrived over time until time $t\geq 0$ from the set $\sigma$ (not known a priori). In the offline case, all the requests arrive at time $0$ and hence all the requests are known at time $0$, i.e., $\sigma_t=\sigma$ and additionally, the size of the sets $|\sigma_t|=|\sigma|$. However, in the online case, at any time $t\geq 0$, it is not known whether $|\sigma_t|=|\sigma|$. The only thing that is known is all the request that arrived until $t$. We denote by $OL(\sigma_t)$ the cost of the online algorithm at time $t$ to serve the requests that arrived in the system until time $t$. We denote by $OPT(\sigma)$ is the cost of the offline algorithm $OPT$ to serve the requests in $\sigma$ assuming that $OPT$ knows $\sigma$ completely at time $t=0$. 
A standard technique to  evaluate the quality of a solution provided by an online algorithm is {\em competitive analysis}.
An online algorithm $OL$ is $c$-competitive if for any request sequence $\sigma_t\subseteq \sigma$ upto time $t\geq 0$ it holds that $OL(\sigma_t)\leq c\cdot  OPT(\sigma)$. 

The execution proceeds in time steps. 
Suppose a request $r_i$ arrives (releases) at time $t_i\geq 0$. In the {\em original online model}, it is assumed that the request  $r_i$ arrived (released) at time $t_i$ is known at time $t_i$.  That is, no server can serve $r_i$ before $t_i$. Let $r_i$ be served by a server at time $t_i'$. Time $t_i'$ denotes the {\em completion time} of $r_i$. We have that $t_i'$ cannot be smaller than $t_i$, i.e., $t_i'\geq t_i$. Notice also that, if a request arrived at $t_i$ can be served at $t_i'=t_i$, then that algorithm would be best possible (i.e., 1-competitive).

The goal in both {\sc oTSP} and {\sc oDarp} 
is to minimize the {\em maximum} completion time $\max_{1\leq i\leq m}t_i'$ for all $m$ (not known a priori) requests in $\sigma$ (i.e., $|\sigma|=m$). 
%
%
In the original online model,
 for nomadic {\sc oTsp}, 
 the lower bound is 2.04 in line metric \cite{Bjelde20} and upper bound is $1+\sqrt{2}=2.41$ \cite{lipmann2003line} in arbitrary metric. For nomadic {\sc oDarp}, the lower bound is 2.0585 in line metric and upper bound is $2.457$ \cite{BaligacsDSW23} in arbitrary metric.  
 For homing versions of both problems, tight competitive ratios of 2 were obtained. We mean by `tight' here that the competitive ratio is 2 as both lower bound and upper bound.
 The significance of these previous studies is that 
 they were achieved with being oblivious to any knowledge on future requests, such as, at what time and at which location the future requests will arrive as well as how many of such future requests are there in $\sigma$. 
 Although designing algorithms that do not rely on any knowledge on future requests has merit on itself, these algorithms do not exploit the knowledge, if available, on future requests in many  real-world applications. 

Consider for example Uber/Lyft rides as a real-world applications in which some limited knowledge is available on the future requests, i.e., Uber/Lyft drivers 
are provided by the Uber/Lyft ride assignment system with ride requests to serve within certain radius of their current positions. Additionally, Uber/Lyft drivers may typically set the radius within which they plan to accept the ride requests.  
However, the previous studies on the original model fail to capture such knowledge. 
Therefore, the following question naturally arises: {\it Can providing an algorithm with limited clairvoyance, i.e., the capability to foresee some limited knowledge on future requests, help in achieving  a better competitive ratio?} In other words, is limited clairvoyance helpful in reducing the gap between competitive ratio upper and lower bounds?

\begin{table*}[!t]
\footnotesize
    \centering
    \begin{tabular}{cccccc}
    \toprule
    {\bf Algorithm} & {\bf Nomadic {\sc oTsp}} &{\bf Homing {\sc oTsp}} & {\bf Nomadic {\sc oDarp}} & {\bf Homing {\sc oDarp}} & {\bf Metric} \\
    
     & ({\bf lower/upper}) & ({\bf lower/upper}) & ({\bf lower/upper}) & ({\bf lower/upper}) &   \\
     \hline 
     & & {\bf Single server} & & &\\
    \toprule
    Original  & $2.04/2.04$ $^4$ & $1.64/1.64 $ $^{4,6}$ 
    & 2.0585/2.457 $^{7,8}$ & 2/2 $^{9,10}$ &line \\
    \hline
    Original & 2.04/2.41 $^{4,5}$ 
    & 2/2 $^6$  & 2.0585/2.457 $^{7,8}$  & 2/2  $^{9,10}$ &arbitrary \\
    \hline
    Lookahead$^1$ 
    & -/$(1+\frac{2}{\alpha})$ & -/$(1+\frac{2}{\alpha})$ & -/-  & -/-  &line \\
    \hline
    Lookahead$^1$ 
    & 2/$\max\{2,$ & 2/2 & -/-  & -/-  &arbitrary \\
    & $1+\frac{1}{2}(\sqrt{\alpha^2+8}-\alpha)\}$&&&& \\
    \hline
    Disclosure$^2$ 
    & -/- & -/$(2-\frac{\rho}{1+\rho})$ & -/-  & -/-  &arbitrary \\
    \hline
    Temporal$^{11}$ & -/$\min\{2.04,1+\frac{3}{2\beta'\delta'}\}$ & -/$\min \{2,$ & -/$\min\{4,1+\frac{3}{\beta'\delta'}\}$  & -/$\min \{3,$  &line \\
    & & $1+\frac{2}{\max\{2,\beta'\delta'\}}\}$ & & $1+\frac{4}{\max\{2,\beta'\delta'\}}\}$ & \\
    \hline
    Temporal & 2/$\min\{2.41,2+\delta'\}$ & 2/2 & 2/$\min\{2.457,2+\delta'\}$  & 2/2  &arbitrary \\
    
    \toprule\toprule
    {\bf Spatial} & -/$\min\{2.04,1+\frac{1+\delta}{1+\beta}\}$ & -/2 & -/$\min\{2.457,1+\frac{1+\delta}{1+\beta}\}$  & -/2  & {\bf line} \\
  
    \hline
    {\bf Spatial} & 2/$\min\{2.41,2+\delta\}$ & 2/2 & 2/$\min\{2.457,2+\delta\}$  & 2/2  & {\bf arbitrary} \\
    \toprule 
     & & {\bf Multiple servers} & & &\\
     \toprule
     Original$^3$ 
     & $1+\Omega(\frac{1}{k})/1+O(\frac{\log k}{k})$ & -/- & -/- & -/- &line \\
    \hline
    Original$^3$ 
    & 2/2.41 & -/- & -/- & -/-  &arbitrary \\
    \hline
    {Temporal} & -/$\min\{2.04,1+\frac{1}{\beta'\delta'}\}$ & -/$\min \{2,$ & -/$\min\{4,1+\frac{3}{2\beta'\delta'}\}$  & -/$\min \{3,1+\frac{2}{\beta'\delta'}\}$  &{ line} \\
    & & $1+\frac{2}{\max\{2,\beta'\delta'\}}\}$ & & & \\
  
    \hline
    {Temporal} & 2/$\min\{2.41,$  & 2/2 & 2/$\min\{2.457,2+\delta'\min\{\gamma',1\}\}$  & 2/2  &{arbitrary} \\
    &$2+\delta'\min\{\gamma',1\}\}$&&&& \\
   
   \toprule\toprule
    {\bf Spatial} & -/$\min\{2.04,2+\frac{\delta}{\gamma}\}$ & -/2 & -/$\min\{2.457,2(1+\delta)\}$  & -/2  & {\bf line} \\
  
    \hline
    {\bf Spatial} & 2/$\min\{2.41,2(1+\delta)\}$ & 2/2 & 2/$\min\{2.457,2(1+\delta)\}$  & 2/2  & {\bf arbitrary} \\
   
    \bottomrule
    \end{tabular}
\caption{A summary of previous and proposed results for both nomadic and homing versions of {\sc oTsp} and {\sc oDarp} (uncapacitated) for $k\geq 1$ servers. `$X/Y$' means that $X$ and $Y$ are respectively the lower and upper bound on competitive ratio; `-' denotes non-existence of the respective lower/upper bound for the respective problem on the respective metric. We have $\rho=\frac{a}{|\mathcal{T}|}$, $\alpha=\frac{\Theta}{D}$, $\delta=\frac{\Delta}{D}$,  $\beta=\frac{\min\{L,R\}}{D}$, $\gamma=\frac{\max\{L,R\}}{D}$, $\delta'=\frac{\Delta'}{D}$, $\beta'=\min\{1,\frac{t_{max}}{\Delta'}\}$, and $\gamma'=\frac{\max_{1\leq j\leq k}|\mathcal{T}_j|}{D}$  with $a$ being the disclosure time, $\mathcal{T}$ being the TSP tour of the requests in $\sigma$, $\Theta$ being the lookahead time, $D$ being the diameter of the metric space, $L,R$ being the length of the left and right side of the line segment from origin $o$,  $\Delta,\Delta'$, respectively, being the spatial locality and temporal locality, $\mathcal{T}_j$ being the TSP tour of the $j$-th server for the requests in $\sigma$, and $t_{max}$ being the maximum release time among the requests in $\sigma$. 
}
    \label{table:summary}
    \vspace{-3mm} 
\end{table*}

Some investigations in the literature attempted to answer this question. 
Particularly, 
\cite{JailletWagner2006} proposed the {\em disclosure} model in which request $r_i$ with release time $t_i$ is known to the online algorithm $OL$ at time $t_{ia}=t_i-a$ for some constant $a>0$. 
Knowing about $r_i$ releasing at $t_i$ at any time prior to $t_i$ may help $OL$ to plan such that $r_i$ can be served as soon as it is released. That is, the difference $t_i'-t_i$ might get smaller compared to that in the original  online model. 
Indeed, they established the improved competitive ratio for homing {\sc oTsp} (see Table \ref{table:summary}). 
\cite{AllulliABL08} proposed the {\em lookahead} model in which $OL$ knows at any time $t\geq 0$ all the requests with release time $t +\Theta$ (with $\Theta$ the lookahead time). 
They established the improved competitive ratio for both homing and nomadic {\sc oTsp} (see Table \ref{table:summary}). 
Notice that, although named differently, the {\em disclosure} and {\em lookahead} models are essentially equivalent with respect to their working principle. Recently,  \cite{guragain2025temporal} proposed a {\em temporal locality} model (written temporal in Tables \ref{table:summary} and \ref{table:model-comparison}) in which the time interval  between two consecutive (set of) requests in the system is known to $OL$ a priori. They considered the case of uniform time interval and established competitive lower and upper bound results. 

In this paper, we propose a new model, which we call {\em spatial locality} model (written spatial in Tables \ref{table:summary} and \ref{table:model-comparison}), that trades the time locality aspect in \cite{guragain2025temporal} to space locality.  Particularly, in the spatial locality model, the online algorithm $OL$ has the advanced information about the radius at which new  request(s) arrive from the current position of server(s). In other words, suppose at time $t\geq 0$,  a server is  currently at position $p$ in a metric space. If the spatial locality is $\Delta=\delta D, 0\leq \delta< 1$, with $D$ being the diameter of the metric space, then new request(s) arriving in the system  at time $t$ will have source position(s) within radius $\Delta$ from $p$.  In the temporal locality model, time notion is captured meaning that requests arriving at time $t$ may arrive in the metric space from the current position of server, whereas in the spatial locality model, distance notion is captured meaning that requests may arrive at any time as long as it arrives within certain radius of the current position of server. 

Interestingly, the spatial locality model falls back to the original online model if $\Delta=D$ (i.e., $\delta=1$), since new request(s) may come anywhere in the considered metric space.  Moreover, the spatial locality model forgoes the concept of time used in the  previous disclosure, lookahead, and temporal locality models and considers only the concept of distance (comparison in Table \ref{table:model-comparison}). Additionally, like temporal locality, the spatial locality model is weaker than the disclosure and lookahead models as it does not know anything about requests except their locality w.r.t. the current position of server(s). 

\begin{table}[!t]
\footnotesize{
\centering
\begin{tabular}{ll}
\toprule
{\bf Model} & {\bf Characteristic}  \\ 
\toprule
Original &  Requests arriving at time $t_i$ are known at $t_i$ \\
\hline
Lookahead & Requests arriving upto $t_i$ are known at $t_i-a$\\
\hline
Disclosure  & Requests arriving upto $t_i+a$  are known at $t_i$\\
\hline
Temporal & Subsequent request arrival time interval $\Delta$  \\ 
\hline
{\bf Spatial} & Request arrival radius $\Delta'$ from server's position  \\ 
\bottomrule
\end{tabular}
\caption{Comparing online routing models.
\label{table:model-comparison}
}
}
\vspace{-4mm}
\end{table}


As discussed previously, the spatial locality model captures better real-world applications, such as Uber/Lyft rides, than disclosure, lookahead, and temporal locality models. Typically, Uber/Lyft tend to accept ride requests arriving from their vicinity and the riders tend to prefer rides that are closest to them so that their wait/ride time is minimized.  Furthermore, there exist applications which tend to fix spatial locality within which they provide service. If the service provider happens to be a mobile entity then that (mobile) service provider resembles the proposed spatial locality model.

In this paper, our goal is to  quantify the power of the advance information given by the spatial locality model. We consider explicitly the case of  fixed ``maximum amount'' of spatial locality, i.e., at any time $t\geq 0$, the distance  between (at least) a server and newly issued request(s) at time $t$ is at most $\Delta\geq 0$ from the current position of server(s).

Note that, following the literature, we do not focus on {\em runtime complexity} (how quickly the designed algorithms compute the solution) of our algorithms. 
Instead, we focus on the quality of the solution obtained by the online algorithms (measured w.r.t. total time needed to serve all the requests) compared to the solutions obtained by the offline algorithms. 
We note here that our solutions can have runtime complexity polynomial with a small factor increase in competitive ratios. 

\vspace{2mm}
\noindent{\bf Contributions.} 
We consider sequential and non-sequential arrival of requests separately. For the sequential arrival, we prove $(1+\delta)$ competitive ratio, $0\leq \delta (=\Delta/ D)\leq 1$, for both nomadic {\sc oTsp} and {\sc oDarp} on both arbitrary and line metric. For the non-sequential arrival,
we have the following four contributions. The first three contributions consider single server. 
The contributions are outlined and compared in Table \ref{table:summary}. 
The numbers in superscript in Table \ref{table:summary} refer to the following papers: $^1$\cite{AllulliABL08} $^2$\cite{JailletWagner2006} $^3$\cite{bonifaci2009-TCS}$^4$\cite{Bjelde20} $^5$\cite{lipmann2003line}$^6$\cite{AusielloFLST01} $^7$\cite{BirxDS19}$^8$\cite{BaligacsDSW23} $^9$\cite{AscheuerKR00}$^{10}$\cite{feuerstein2001line} $^{11}$\cite{guragain2025temporal}.

Notice that our results are established in the form $\min\{X,Y\}$ with $X$ and $Y$, respectively, being the competitive ratio in the original online and spatial locality models. Depending on $\delta$ (which is known a priori due to the spatial locality model), we approach either runs the original online model algorithm that achieves $X$  or our algorithm that achieves $Y$. If the original model algorithm runs (e.g., when $\delta\geq 0.41$ in arbitrary metric), the spatial model algorithm is never executed and hence $X$ and its analysis holds. If our spatial locality algorithm runs, then the original online model algorithm never runs and our proof for $Y$ holds.

\begin{itemize}
\item We prove a lower bound result -- no deterministic algorithm can achieve better than 2-competitive ratio for both nomadic and homing {\sc oTsp} (which applies directly to both nomadic and homing {\sc oDarp}),  independently of the amount of spatial locality $\Delta=\delta D$. 
({\bf Section \ref{section:lower}})
\item We prove $\min\{2.41,2+\delta\}$-competitive ratio for nomadic {\sc oTsp} on arbitrary metric. The best previously known competitive ratio in the original online model is $1+\sqrt{2}=2.41$ and the spatial locality $\Delta$ such that  $\delta=\Delta/D<\sqrt{2}-1$ provides the improved competitive ratio. 
For an illustration of the benefit, consider a square map of $30 \times 30$ with diameter $D=30\sqrt{2}$. For $\Delta <  0.41 \times 30\sqrt{2}$, our online algorithm for the spatial model provides a better competitive ratio than the online algorithm in the original online model. 

For the line metric, we establish the $\min\{2.04,1+\frac{1+\delta}{1+\beta}\}$-competitive ratio for nomadic {\sc oTsp}, where $\beta=\frac{\min\{L,R\}}{D}$ and $L,R$ are the lengths of the left and right ends of the line segment of length $D$ from the origin $o$. The best previously known competitive ratio in the original model is 2.04 (which is also the lower bound) and our result is an improvement for $\delta=\Delta/D<0.04$ for any value of $0\leq \beta \leq \frac{1}{2}$. 
For an illustration of the benefit, consider a line metric with origin at the either end. In this case $\beta=0$ and hence $\Delta < 0.04 D$ provides improved competitive ratio.  When origin is not at either end, $\beta>0$ and hence the benefit is more visible with increasing $\beta$. If origin at the center, then $\beta=1/2$ and, when $\beta=1/2$, the equation  $1+\frac{1+\delta}{1+\beta}$ reduces to $1+\frac{2}{3}(1+\delta)=1.67+\frac{2}{3}\delta$. Therefore, any $\delta<0.55$ (in other words $\Delta< 0.55 D$) provides improved competitive ratio. ({\bf Section \ref{section:tsp}})

For homing {\sc oTsp}, there is a 2-competitive algorithm in the original model which applies directly in the  spatial locality model.  

\item 
We prove $\min\{2.457,2+\delta\}$-competitive ratio for nomadic {\sc oDarp} on arbitrary metric.
The best previously known competitive ratio in the original model is $2.457$ and  the spatial locality $\Delta$ such that  $\delta=\Delta/D<0.457$ provides the improved competitive ratio.
On line metric, we prove the $\min\{2.457,1+\frac{1+\delta}{1+\beta}\}$-competitive ratio for nomadic {\sc oDarp}. The best previously known competitive ratio in the original model is 2.457 (the lower bound is 2.0585) even for line metric and our result is an improvement for $\delta=\Delta/D<0.457$ for any $0\leq \beta \leq \frac{1}{2}$. When $\beta>0$, the benefit becomes more visible. 
({\bf Section \ref{section:darp}})

For homing {\sc oDarp}, there is a 2-competitive algorithm in the original model which applies directly in the spatial locality model.

\item We consider $k>1$ servers and establish competitive ratios for nomadic {\sc oTsp} and {\sc oDarp} on arbitrary as well as line metric. 
({\bf Section \ref{section:kserver}})
\end{itemize}

Our results exhibit 
that irrespective of the metric space, smaller spatial locality provides better competitive ratios compared to the original model. This is in contrast to the lookahead and disclosure models where larger lookahead and disclosure times were providing better results (refer Table \ref{table:summary}). Additionally, this is also in contrast to the temporal model in which larger locality provides better results for line metric.


In this paper we consider symmetric cost, that is $dist(a,b)=dist(b,a)$ for two request locations $a,b$, homogeneous vehicles, i.e., same $\Delta$ and, at any time $t\geq 0$, the requests coming at time $t$ are within $\Delta$ of the location of server at time $t$. Developing online algorithms considering asymmetric cost, heterogeneous vehicles with different $\Delta$, and requests arriving at distance larger than $\Delta$ are left as important directions for future work.

\vspace{2mm}
\noindent{\bf Challenges and Techniques.} Consider first the case of line metric and single server. Since requests come within $\Delta$ distance from the current position of server, it may be the case that the requests come on both left and right sides of the server. Since server has to pick a direction to serve, the requests on the one side remain outstanding as new request may arrives on the direction server is currently serving. This increases the waiting time of the requests that are on the other side.   Therefore, the server needs to switch direction to serve the requests that were not enroute. As soon as the server switches direction, requests might arrive on that side at which server is not enroute. We overcome this difficulty by  doing the switch only when there is no outstanding request enroute in the current direction. We show that this switching mechanism is indeed able to provide our claimed bounds in the line metric. For arbitrary metric, the requests might come within the circular area of $\Delta$ radius from the current position of server. Here, we were able to show that even in such situations, the request that is closest to the (distinguished) origin can be served by thew server from its current position with additional traversal cost only at most $\Delta$. This proof is highly non-trivial (see Lemma \ref{lemma:delta}).
We then extend these results to multiple servers.

\vspace{2mm}
\noindent{\bf Previous Work.} 
The best previously known lower and upper (competitive ratio) bounds on both {\sc oTsp} and {\sc oDarp}  are given in Table \ref{table:summary} for both line and arbitrary metrics under different models with/without clairvoyance (namely, original, lookahead, disclosure, and temporal locality). 

We first discuss literature on {\sc oTsp}. 
In the original model, {\sc oTsp} was first considered by 
\cite{AusielloFLST01} in which they established tight (competitive ratio)  of 2/2 (lower/upper) on arbitrary metric and  1.64/1.75 on line metric for homing {\sc oTsp}. For nomadic {\sc oTsp}, they provided the lower bound of 2 on line metric and upper bound of $\frac{5}{2}$ on arbitrary metric.   \cite{lipmann2003line} improved the upper bound to $1+\sqrt{2}$ for nomadic {\sc oTsp} on arbitrary metric.  
On line metric, 
\cite{Bjelde20}
provided tight bound of $2.04/2.04$ for nomadic {\sc oTsp} and for homing {\sc oTsp} improved the upper bound to 1.64  matching the lower bound.  Considering multiple servers ($k>1$), 
\cite{bonifaci2009-TCS} provided lower/upper bounds of $1+\Omega(\frac{1}{k})/1+O(\frac{\log k}{k})$ and $2/(1+\sqrt{2})$ for nomadic {\sc oTsp} on line and arbitrary metric, respectively.

In the lookahead model, 
\cite{AllulliABL08} provided an upper bound of $1+\frac{2}{\alpha}$ for both nomadic and homing {\sc oTsp} on line and lower/upper bounds of 2/$\max\{2,1+\frac{1}{2}(\sqrt{\alpha^2+8}-\alpha)\}$ and 2/2 for nomadic and homing {\sc oTsp}, respectively, on arbitrary metric. In the disclosure model, 
\cite{JailletWagner2006} provided an upper bound of $(2-\frac{\rho}{1+\rho})$ for homing {\sc oTsp}.  The lookahead and disclosure models did not consider the problems under multiple servers. 
Recently, \cite{guragain2025temporal} studied the temporal locality model and provided results for $k\geq 1$ servers in both line and arbitrary metric. They also established a lower bound on competitive ratio of 2 for arbitrary metric. Notice that this model captures request arrival time duration aspect but not request location meaning that request(s) may arrive anywhere in the metric. Our spatial locality model captures request location aspect but not time meaning that request(s) may arrive anytime with the only restriction that the request(s) arrives with in radius $\Delta$ of the current server(s) position.

We now discuss literature on {\sc oDarp}. 
In the original model, for homing {\sc oDarp} lower/upper bounds of 2/2 exist on both line and arbitrary metric. For nomadic {\sc oDarp}, the best previously known lower/upper bounds are 2.0585/2.457 \cite{BaligacsDSW23}. {\sc oDarp} was not studied with multiple servers 
 and also not in the lookahead and disclosure models.  
\cite{guragain2025temporal} provided results for $k\geq 1$ in the temporal locality model.

There is also an extensive literature, e.g., \cite{AusielloFLST01,AscheuerKR00,feuerstein2001line}, on solving {\sc oTsp} and {\sc oDarp} with the objective of minimizing the {\em sum of completion times}, e.g., minimize $\sum_{i=1}^m t_i'$, 
in contrast to our considered objective of minimizing the maximum completion time, $\max_{1\leq i\leq m} t_i'$.
In this paper, we focus on minimizing maximum completion time. We acknowledge that minimizing sum of completion times is an important direction that  demands techniques substantially different from the techniques developed in this paper and hence  left for future work.

A distantly related model is of {\em prediction} \cite{GouleakisLS23} which provides the online algorithm $OL$ with the predicted locations beforehand (which may be erroneous) of future requests arriving over time. The arrival time is not predicted in the prediction model. Therefore, the predication model is different from the lookahead, disclosure, temporal, and spatial locality models  since these models (lookahead, disclosure, temporal locality, and spatial locality) do not predict request locations. 
\section{Model}
\label{section:model}
We consider the online (real-time) model where time is divided into steps. We also assume that a time step is equivalent to unit distance and vice-versa. This way, we can measure cost of the designed algorithm based on number of time steps (or number of distance units).  
We consider a sequence  $\sigma=r_1,\ldots,r_m$ of $m$ requests; $m$ is not known a priori.
Every request $r_i=(t_i,e_i,d_i)$ is a triple, where $t_i\geq 0$ is the {\em release time}, $e_i$ is the {\em source} location or point, and $d_i$ is the {\em destination} location.
In {\sc oTsp}, $e_i$ and $d_i$ coincide and hence they can be considered as a single point $e_i$. All the information about $r_i$: $t_i,e_i,d_i$, and its existence is revealed only at time $t_i$. 
We call this (real-time) online model  {\em original}. 
The lookahead and disclosure models \cite{AllulliABL08,JailletWagner2006} assume that the request $r_i$ releasing at time $t_i$ is known at time $t_i-a, a\geq 0$. 

Our spatial locality  model is formally defined as follows. 

\begin{definition}[{\bf spatial locality}]
\label{definition:spatiallocality}
    An online algorithm $OL$ has spatial locality $\Delta\in \mathbf{N}$ at time $t\geq 0$, if and only if request(s) arriving at time $t$ has source location $e_i$ such that $|pos_{OL}(t)-e_i|\leq \Delta$, where $pos_{OL}(t)$ is the position of a server at time $t$.  
\end{definition}

We assume that the execution starts at time $0$. We consider $k\geq 1$ servers $s_1,\ldots, s_k$, initially positioned at a distinguished location called origin $o$. 
The servers can move with unit speed  
traveling in one time step unit distance. 
Each request $r_i$ is called  {\em served} if  at least a server moves first to its source location $e_i$ and then to its destination location $d_i$. 
 To {\em serve} $r_i$, a server has to visit the locations $e_i,d_i$, but not earlier than $t_i$, and $e_i$ has to be visited first and then $d_i$ (if $d_i$ is visited first and then $e_i$, then this does not serve the request). 

Let $(A,B)$ denote the shortest path between two points $A$ and $B$. Let $dist(A,B)$ denote the length of $(A,B)$. 
Following the literature \cite{BienkowskiKL21-ICALP,feuerstein2001line,KrumkePPS03-TCS}, we consider 
arbitrary metric space $\mathcal{M}$ which  
satisfies the following properties: 
\begin{itemize}
    \item [(i)] {\em definiteness:} for any point $x\in \mathcal{M}, dist(x,x)=0$, 
    \item [(ii)] {\em symmetry:} for any two distinct points $x,y\in \mathcal{M}, dist(x,y)=dist(y,x)$, and 
    \item [(iii)] {\em triangle inequality:} for any three distinct points $x,y,z\in \mathcal{M}$, $dist(x,z)+dist(z,y)\geq dist(x,y)$. 
\end{itemize}
Arbitrary metric $\mathcal{M}$ becomes line metric $\mathcal{L}$ when  $dist(x,z)+dist(z,y)=dist(x,y)$ for any three distinct points $x,y,z\in \mathcal{M}$ with $z$ in between $x$ and $y$ in the line segment connecting $x$ and $y$.
 The servers, origin $o$, and  requests in $\sigma$ are all on $\mathcal{M}$ (on $\mathcal{L}$ in case of line metric). Notice that, in the asymmetric setting, the symmetry property above does not hold, i.e., for any two distinct points $x,y\in \mathcal{M}, dist(x,y)\neq dist(y,x)$. Designing online algorithms under such asymmetric setting is left for future work.

The {\em completion time} of request $r_i=(t_i,e_i,d_i)\in \sigma$ is the time $t_i'\geq t_i$ at which $r_i$ has been served; $r_i$ remains {\em outstanding} for the duration $t_i'-t_i$. 
Given $\sigma$, a {\em feasible schedule} for $\sigma$ is a sequence of moves of the server(s) such that all the requests in $\sigma$, arriving following the spatial locality model (Definition \ref{definition:spatiallocality}), are served. $OL(\sigma)$  denotes the maximum completion time of an online algorithm $OL$ for serving the requests in $\sigma$ i.e., $$OL(\sigma)=\max_{1\leq i\leq m} t_i'.$$ $OPT(\sigma)$ is  the optimal completion time of an optimal algorithm $OPT$ for serving the requests in $\sigma$ that arrive following the spatial locality model (Definition \ref{definition:spatiallocality}).  $OPT(\sigma)$ has two components: (i) the largest release time $t_i$ among the requests in $\sigma$, if at least one request has $t_i>0$ (ii) the length of the optimal TSP tour $\mathcal{T}$ connecting the requests in $\sigma$ starting from origin $o$.
Notice that if a request $r$ arrives at time $t> 0$, then it cannot be served before $t$, making $t$ the lower bound for $r$. Additionally, even if all requests in $\sigma$ come at time $t=0$, a request needs $\mathcal{T}$ time to be served. 
Therefore, 
$$OPT(\sigma)\geq \max\left\{\max_{1\leq i\leq m} t_i,|\mathcal{T}|\right\}.$$


The online algorithm $OL$ is said to be $c$-{\em competitive} if $$OL(\sigma) \leq c\cdot OPT(\sigma)$$ for any $\sigma$. The {\em competitive ratio} of 
$OL$ is the smallest $c$ such that $OL$ is $c$-competitive. 



\subsection{Online Algorithm for the Sequential Arrival}
In case of the sequential arrival of the requests in $\sigma$ satisfying spatial locality (Definition \ref{definition:spatiallocality}), we provide an online algorithm $OL$ that achieves competitive ratio $(1+\delta), 0\leq \delta (=\Delta/D) \leq 1,$ for both nomadic {\sc oTsp} and nomadic {\sc oDarp}. 
The {\em sequential arrival} is defined as follows: a request $r_{j}=(t_j,p_j)$ arrives  after the current request $r_i=(t_i,p_i)$ has been served. Any arrival that does not follow this sequential definition is called {\em non-sequential arrival}. 

Suppose request $r_{j}=(t_j,p_j)$ arrives  after the request $r_i=(t_i,p_i)$ have been served. 
        Suppose the server is at location $s_i$ after $r_i$ is served. Due to the spatial locality model, $|p_j-s_i|\leq \delta D$. 
        Let $\sigma_j\subseteq \sigma$ be the set of requests in $\sigma$ that have arrived so far.
        Therefore, server must serve $r_j$ (equivalently $\sigma_j$) by time $$OL(\sigma_j)\leq t_j+|p_j-s_i|\leq t_j+ \delta D.$$ An offline algorithm would have cost $$OPT(\sigma)\geq OPT(\sigma_j)\geq \max\{t_j,|p_j-s_i|\}.$$ Therefore,
        
        $$\frac{OL(\sigma_j)}{OPT(\sigma)}\leq \frac{t_j+ \delta D}{\max\{t_j,|p_j-s_i|\}}\leq 1+\delta.$$

        We describe now why this $1+\delta$ competitive ratio gradually decreases toward 1. Due to the sequential arrival for the request $r_k$ that arrives after $r_j$, $t_k\geq t_j+|p_j-s_i|$. 
        Similarly, the lower bound now becomes $\max\{t_k,|p_k-s_j|\}$.
        Since both nominator and denominator are increasing with the arrival of each new request, the ratio gets monotonically smaller over time (especially after $t>D$).
We summarize the result in the following theorem. 

\begin{theorem}[{\bf nomadic {\sc oTsp} and {\sc oDarp} on arbitrary and line metric}]
\label{theorem:sequential}
    There exists a $(1+\delta)$-competitive algorithm, $0\leq \delta (=\Delta/D)\leq 1$ for nomadic {\sc oTsp} and {\sc oDarp} on any metric (line or arbitrary) when requests arrive sequentially.
\end{theorem}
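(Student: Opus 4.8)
The plan is to analyze the natural lazy-greedy server: it idles wherever it last finished, and whenever a new request is released it travels along a shortest path to serve it (straight to $p_j$ for {\sc oTsp}; first to the source $e_j$ and then to the destination $d_j$ for {\sc oDarp}). Sequential arrival is exactly what makes this well defined: a new request appears only after the previous one is complete, so there is never more than one outstanding request and no scheduling conflict arises. The first fact I would record is that the completion times are non-decreasing, $t_1'\le t_2'\le\cdots$, since $t_j\ge t_{j-1}'$ and serving only adds travel; hence $OL(\sigma_j)=t_j'$ and the worst prefix is the full sequence $j=m$.

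Next I would upper bound the online cost of a single step. When $r_j$ is released the server sits at the point $s_i$ where it finished the previous request, so spatial locality gives $|p_j-s_i|\le\Delta=\delta D$ and therefore $t_j'\le t_j+|p_j-s_i|\le t_j+\delta D$ for {\sc oTsp}; for {\sc oDarp} the same bound applies to the source and the ride adds one leg, $t_j'\le t_j+\delta D+|d_j-e_j|$. Then I would lower bound the optimum. From $OPT(\sigma)\ge OPT(\sigma_j)$ together with the two structural bounds on $OPT$ (the largest release time and the length of the optimal tour through the served points) I get $OPT(\sigma)\ge\max\{t_j,|p_j-s_i|\}$, the second term because any tour visiting both $s_i=p_i$ and $p_j$ has length at least $dist(p_i,p_j)=|p_j-s_i|$ by the triangle inequality. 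For {\sc oDarp} I would additionally use that $OPT$ cannot begin the ride of $r_j$ before $t_j$ and the ride itself costs $|d_j-e_j|$, giving the stronger $OPT(\sigma)\ge t_j+|d_j-e_j|$; this is what lets the extra ride leg cancel rather than inflate the ratio.

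Dividing the two estimates yields $\dfrac{t_j+\delta D}{\max\{t_j,|p_j-s_i|\}}$ for {\sc oTsp} and $\dfrac{t_j+\delta D+|d_j-e_j|}{t_j+|d_j-e_j|}$ for {\sc oDarp}, and the goal is to bound each by $1+\delta$. The {\sc oDarp} case is the milder one: the ratio equals $1+\frac{\delta D}{t_j+|d_j-e_j|}$, which is at most $1+\delta$ as soon as $t_j+|d_j-e_j|\ge D$, so a long ride alone can already close it.

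The hard part is reconciling the claimed $1+\delta$ with the fact that an isolated {\sc oTsp} step admits only the weaker bound $2$: with $a=t_j$ and $b=|p_j-s_i|\le\delta D$, the quantity $\frac{a+\delta D}{\max\{a,b\}}$ equals $2$ when $a=b=\delta D$ and collapses to $1+\delta$ only once the release time dominates the travel budget, i.e. once $t_j\ge D$ (there $\delta D\le\delta t_j$, so $t_j+\delta D\le(1+\delta)t_j$). The crux is therefore to argue that the binding prefix lives in this large-time regime. Under sequential arrival the release times are self-reinforcing, $t_j\ge t_{j-1}'=t_{j-1}+|p_{j-1}-s_{j-2}|$, so numerator and denominator grow together and the ratio is driven monotonically toward $1$ as requests accumulate; I would make this precise by amortizing the fixed travel budget $\delta D$ against the accumulated release time, which is exactly the step the one-line per-request estimate misses. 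I expect this monotonicity-plus-amortization argument, rather than either inequality on its own, to be the technical heart of the proof. Finally, nothing in the reasoning separates the line from an arbitrary metric—only symmetry and the triangle inequality of $\mathcal{M}$ are used—so both metrics are handled simultaneously.
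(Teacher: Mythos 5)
Your setup coincides with the paper's own proof: the same lazy server, the same per-request upper bound $OL(\sigma_j)\le t_j+\delta D$, and the same lower bound $OPT(\sigma)\ge\max\{t_j,|p_j-s_i|\}$. Your diagnosis of where this runs into trouble is also correct, and in fact sharper than the paper's write-up: the paper simply asserts $\frac{t_j+\delta D}{\max\{t_j,|p_j-s_i|\}}\le 1+\delta$ and then appeals informally to monotonicity, whereas (as you note) this quantity equals $2$ when $t_j=|p_j-s_i|=\delta D$. The problem is that your proposed repair --- amortizing the fixed travel budget $\delta D$ against accumulated release times so that ``the binding prefix lives in the large-time regime'' --- is left as a plan, and it cannot be executed. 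The competitive ratio is a supremum over \emph{all} sequential instances, and a legal sequential instance can consist of a single request, released at time $t_1=\delta D$ at distance exactly $\delta D$ from wherever the deterministic server sits at that moment. Then $OL(\sigma)=2\delta D$ while $\max\{t_1,dist(o,p_1)\}=\delta D$, so against that lower bound the ratio is $2>1+\delta$. There is nothing to amortize for the first request: the worst case sits at the very beginning of the sequence, not in its tail, so no monotonicity-plus-amortization argument can pull the guarantee below $2$.

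The ingredient that actually closes the argument is the component of the paper's $OPT$ lower bound that you dropped: $OPT(\sigma)\ge|\mathcal{T}|$, the length of the optimal tour through the requests, combined with the assumption the paper invokes in its arbitrary-metric theorems (``since the diameter is $D$, at least the location of two requests is $D$ apart and hence $|\mathcal{T}|\ge D$''). Once the denominator is $\max\{t_j,D\}$ rather than $\max\{t_j,|p_j-s_i|\}$, the claimed inequality is immediate: if $t_j\le D$ then $\frac{t_j+\delta D}{D}\le\frac{D+\delta D}{D}=1+\delta$, and if $t_j>D$ then $\frac{t_j+\delta D}{t_j}=1+\delta\frac{D}{t_j}<1+\delta$. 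The same substitution repairs your {\sc oDarp} ratio $1+\frac{\delta D}{t_j+|d_j-e_j|}$, whose conditional validity (``as soon as $t_j+|d_j-e_j|\ge D$'') you had noticed but not discharged. So the missing idea is not amortization over the sequence but the tour/diameter lower bound on $OPT$; without an assumption of that kind, the one-request instance above shows the $(1+\delta)$ bound is simply not provable, for your argument or the paper's.
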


The significance of Theorem \ref{theorem:sequential} is that it circumvents the lower bound of 2 in competitive ratio we prove in Theorem \ref{theorem:lower} considering the non-sequential case. Notice that the upper bounds in non-sequential case apply to the sequential case but the opposite is not true.  Therefore, the manuscript in the following considers the non-sequential case and establishes interesting lower and upper bound results. We first start with a lower bound.

\section{Lower Bound}
\label{section:lower}
The first natural direction is to study the impact of spatial locality on the solutions to both {\sc oTsp} and {\sc oDarp}. 
We prove that no online algorithm for {\sc oTsp} (nomadic and homing) can be better than 2-competitive in arbitrary metric, especially when requests are non-sequential. Interestingly, our lower bound holds for any value of spatial locality $\Delta$. 

Although we are able to establish the same lower bound for both nomadic and homing versions of {\sc oTsp}, their implication is substantially different. For the homing {\sc oTsp}, 
an optimal 2-competitive online algorithm in the original model exists \cite{AusielloFLST01}, i.e., spatial locality has no impact. Instead, for the nomadic {\sc oTsp}, there is a lower bound of $2.04$ exists in the original model for line metric \cite{Bjelde20} (which directly applies to arbitrary metric). Additionally, the best previously known online algorithm achieves the competitive ratio of $1+\sqrt{2}=2.41$ \cite{lipmann2003line} in the original model for nomadic {\sc oTsp} in  arbitrary metric. We will show in later sections that spatial locality is indeed useful for nomadic {\sc oTsp} typically when the locality is small. 


\begin{theorem}[{\bf lower bound}]
\label{theorem:lower}
No deterministic algorithm for homing/nomadic {\sc oTsp} (and {\sc oDarp}) can be better than $2$-competitive (with no assumption of arrival), independently of the spatial locality $\Delta$.  
\end{theorem}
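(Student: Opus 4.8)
The plan is to argue by an adaptive adversary, and to exploit the single structural observation that makes the claim plausible for \emph{every} $\Delta$: the entire hard instance can be forced to live inside a ball of radius $\Delta$ around the origin, so the locality constraint of Definition~\ref{definition:spatiallocality} is something the adversary can always respect while still trapping the server. Two preliminary reductions simplify the bookkeeping. First, a request with $e_i=d_i$ is a legal {\sc oDarp} request, so {\sc oTsp} is the special case $e_i=d_i$ of {\sc oDarp}; hence any lower bound we prove for {\sc oTsp} transfers verbatim to {\sc oDarp}, and I would prove everything for {\sc oTsp}. Second, I would treat the nomadic and homing versions separately, since on the line the homing bound is only $1.64$ and the $2$ must come from a genuinely non-line (arbitrary) metric.

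For \textbf{nomadic} {\sc oTsp} I would give one explicit, self-contained construction on the line (a subcase of arbitrary metric). Fix the decisive release time at $\tau=\Delta/2$ and let $x=pos_{OL}(\tau)$ be the (deterministically determined) server position then. Because the server moves at unit speed from the origin, $|x|\le \tau=\Delta/2$. The adversary releases a single request $r$ at time $\tau$ at the point $e=-\operatorname{sgn}(x)\cdot(\Delta/2)$, i.e.\ on the side opposite the server. Spatial locality holds automatically, since $|x-e|=|x|+\Delta/2\le \Delta$. Then $OPT(\sigma)=\Delta/2$ (go straight from $o$ to $e$, arriving exactly at the release time $\tau$), while the online server, which learns $e$ only at $\tau$, cannot finish before $\tau+|x-e|=\Delta+|x|\ge\Delta$. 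This gives $OL(\sigma)/OPT(\sigma)\ge 2+2|x|/\Delta\ge 2$, independent of $\Delta$, with equality exactly when the server sits at the origin and any deviation strictly worse. The key point to stress is that \emph{capping the release time at $\Delta/2$ prevents the server from escaping the $\Delta$-ball}, which is what lets the adversary always place the opposite-side trap legally.

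For \textbf{homing} {\sc oTsp} I would adapt the classical general-metric construction of \cite{AusielloFLST01} that drives the ratio to $2$, rescaled so that all points sit within radius $\Delta$ of $o$. The mechanism is to maintain an unvisited portion of the metric and, each time the online server commits to heading back toward the origin, reveal a fresh request in a not-yet-visited branch within distance $\Delta$ of the server's current position; the return requirement then forces the online server into repeated nested excursions, while $OPT$ absorbs all requests into a single well-timed homing tour whose makespan is essentially half of what the online algorithm pays. Scaling the whole instance by $\Delta$ keeps every revealed request within $\Delta$ of the server (so Definition~\ref{definition:spatiallocality} is met) and leaves the ratio, which is scale-invariant, approaching $2$.

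The step I expect to be the main obstacle is reconciling two competing demands in the homing case: each revealed request must simultaneously be \emph{within $\Delta$ of the current server position} (for legality) and \emph{cheap for $OPT$ yet expensive for $OL$} (for the bound). A server can try to defeat a near-origin trap by fleeing far from $o$, but then a request placed near the origin would be more than $\Delta$ away and illegal to release; the resolution, exactly as in the nomadic argument, is to place the forcing request adaptively on the far/unvisited side \emph{relative to the server} and to choose the decisive release times small enough that the server provably cannot leave the $\Delta$-ball before the trap is sprung. Verifying that this adaptive, rescaled schedule keeps $OPT$ at the intended value across all of the server's possible responses is the delicate part; the nomadic and reduction arguments above are routine by comparison.
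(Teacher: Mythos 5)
Your nomadic argument is correct and complete: since the algorithm is deterministic and receives no input before time $\tau=\Delta/2$, the adversary knows $x=pos_{OL}(\tau)$, the opposite-side request at distance $\Delta/2$ from $o$ is legal because $|x-e|\le|x|+\Delta/2\le\Delta$, and the ratio $(\Delta+|x|)/(\Delta/2)\ge 2$ follows. This is a genuinely different (and for this case simpler) route than the paper, which instead uses a single star-graph construction with edge length $\Delta$: $N$ requests at the tips, each served tip refilled $\Delta$ time units after it is served, so that the online server pays roughly $4N\Delta$ while the offline server times its visits to catch both requests per tip and pays roughly $2N\Delta$. Note also that your single-request trap only yields $3/2$ for the homing version (the return leg helps the online server proportionally), so your decision to treat homing separately is not just cosmetic — it is forced.

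The homing half, however, has a genuine gap, and it sits exactly where you placed your confidence. Your claim that ``scaling the whole instance by $\Delta$ keeps every revealed request within $\Delta$ of the server'' is false: two points in a ball of radius $\Delta$ around $o$ can be $2\Delta$ apart, so in a rescaled star a refill request at tip $v_1$ released while the server sits at another tip $v_2$ violates Definition~\ref{definition:spatiallocality}. Scale-invariance of the ratio is not the issue; legality is, and it is a constraint on \emph{timing relative to the server's trajectory}, not on the diameter of the point set. The missing idea — which is the actual content of the paper's proof — is to release the refill at a tip exactly $\Delta$ time units after the online server serves that tip: since the server moves at unit speed, it is then provably within distance $\Delta$ of that tip, so locality holds no matter what the server does, while the offline server can still delay and collect both requests at each tip in one visit. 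Your alternative mechanism (``reveal a fresh request in a not-yet-visited branch each time the server heads back toward the origin'') does not obviously admit such a legality guarantee, and you explicitly defer the verification that $OPT$ stays at the intended value; as written, the homing case is a plan resting on an incorrect premise rather than a proof.
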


\begin{proof}
    Consider a star graph $G=(V,E)$ consisting of $N+1$ nodes; a center node $v_0$ and $N$ periphery nodes $v_1,\ldots,v_N$ (see Fig.~\ref{fig:lower_bound}). 
    Each periphery node $v_i$ is connected to the center node $v_0$ by an edge $e_i=(v_0,v_i)$ of length $\Delta$. Let $A$ be any algorithm for homing {\sc oTsp} or nomadic {\sc oTsp}  on $G$ with spatial locality $\Delta$. Let $\Delta$ be such that   $N\mod{\Delta} =0$. 

    Let $t_0=\Delta^2$.
    Until $t_0$, there is no request.
    At time $t_0$, $N$ requests are presented, one on each periphery node (left of Fig.~\ref{fig:lower_bound}). 
    At time $t_0+\Delta$, a request has been served by $A$ and $N-1$ requests are still waiting to be served (middle of Fig.~\ref{fig:lower_bound}).  At time $t_0+2\Delta$, a new request is presented on same vertex on which the request was served at $t_0+\Delta$ (right of Fig.~\ref{fig:lower_bound}). Therefore, at time $t_0+2\Delta$, there is again $N$ requests, one on each vertex $v_i$ of $G$. Continue releasing each requests in the interval of $2\Delta$ time steps until time $t_0+2N\Delta$. 

    \begin{figure}
        \centering
        \includegraphics[scale = 0.17]{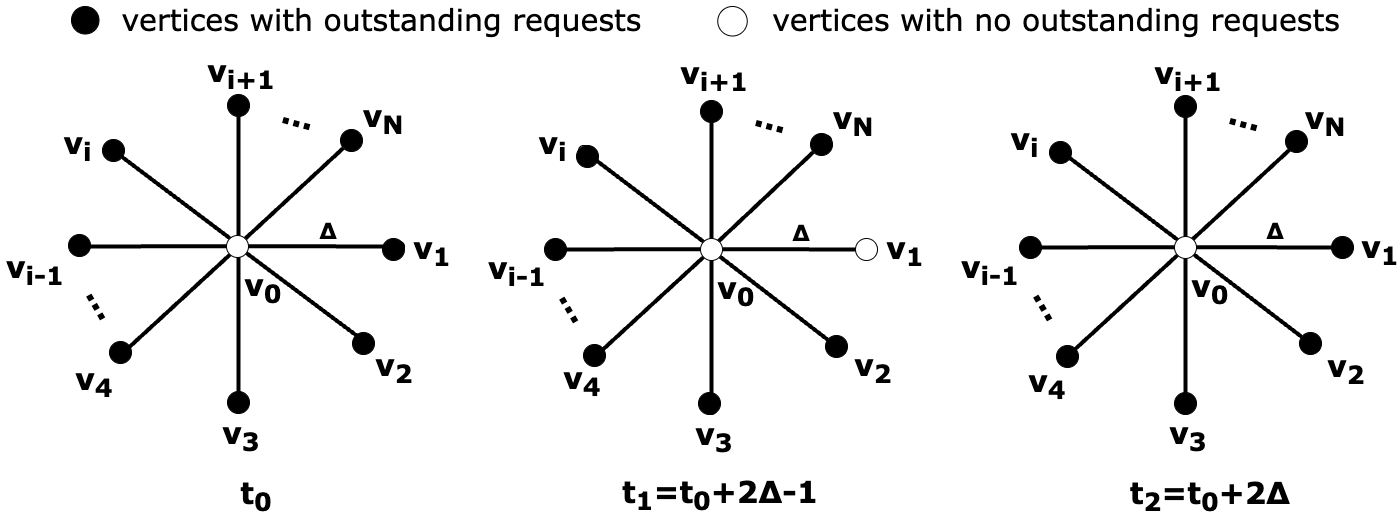}
        \caption{An illustration of lower bound construction in the spatial locality model: ({\bf left}) $N$ requests arrive at time $t_0=\Delta^2$ on $N$ periphery nodes; no request arrives during $[0,t_0)$. ({\bf middle}) A request at some node, say $v_1$, is served at time $t_0+\Delta$ and it remains empty from $t_0+\Delta+1$ until $t_0+2\Delta-1$. ({\bf right}) A new request arrives at the empty node ($v_1$ in the figure) at time $t_0+2\Delta$.}
        \label{fig:lower_bound}
        \vspace{-3mm}
    \end{figure}
    
    Let $t_f={\Delta}({\Delta}+2N)$. At $t_f$, there are exactly $N$ requests in $N$ periphery nodes of $G$ waiting to be served.  Suppose after $t_f$, no new request is presented. That is, in total $2N$ requests have been presented in $G$ from $t_0$ until $t_f$. At time $t_f$, since $A$ still needs to serve $N$ requests, it cannot finish serving them before time $t_f+2{\Delta}N=\Delta({\Delta}+4N)$. 

    We now consider an offline adversary $OPT$. We show that $OPT$ can complete serving all $2N$ requests in no later than time $\Delta^2+2\Delta N$. Consider the vertex request released at time $t_0+2\Delta$.  $OPT$ waits until $t_0+\Delta$ at $v_0$, then reaches the first request at time $t_0+2\Delta$ and serve the $2$ requests on the node by time $t_0+2\Delta$. Continuing this way, $OPT$ finishes not later than time $t_0 + 2\Delta N$ serving all $2N$ requests. For homing {\sc oTsp}, $OPT$ needs additional $\Delta$ time steps to return to $v_0$ after serving the last request. 

    Therefore, the lower bound on competitive ratio becomes
    $$\frac{A(\sigma)}{OPT(\sigma)}\geq \frac{\Delta({\Delta}+4N)}{{\Delta}(\Delta+2N)}.$$
 The ratio $\frac{A(\sigma)}{OPT(\sigma)}$ becomes arbitrarily close to 2 for $N\gg \Delta$. 
\end{proof}

\section{Single-server Online TSP}
\label{section:tsp}
We first present an algorithm for nomadic {\sc oTsp}  that achieves competitive ratio $\min\{2.04,1+\frac{1+\delta}{1+\beta}\}$ in line metric $\mathcal{L}$.   
We then  present an algorithm that achieves competitive ratio $\min\{2.41,2+\delta\}$ in arbitrary metric $\mathcal{M}$. 
%
For homing {\sc oTsp}, there is a 2-competitive algorithm in the original model \cite{AusielloFLST01} which directly applies to the spatial locality model in both line and arbitrary metric.

\subsection{Algorithm on Line Metric}
\label{section:alg-lm-otrp}
The pseudocode is  in Algorithm \ref{algorithm:line-metric}. 
Server $s$ is initially on origin $o$. 
Let $pos_{OL}(t)$ denote the position of server $s$ at any time $t\geq 0$. At $t=0$, $pos_{OL}(0)=o$.
Let $S$ denote the set of outstanding requests. When $S=\emptyset$, the server $s$ stays at its current position  $pos_{OL}(t)$. As soon a new request arrives,
Algorithm \ref{algorithm:line-metric} does the following: Run the original model algorithm of \cite{Bjelde20} which gives competitive ratio 2.04 or run our algorithm that takes spatial locality into account to achieve $(1+\frac{1+\delta}{1+\beta})$ competitive ratio.  The server can do this decision before serving any request since it knows both the parameters $\Delta=\delta D$ and $\beta=\min\{L,R\}/D$. 
Knowing $\delta$ and $\beta$, if $1+\frac{1+\delta}{1+\beta}\geq 2.04$, then we run the original model algorithm of \cite{Bjelde20} that achieves $X=2.04$.  
If $\beta$ and $\delta$ are such that $1+\frac{1+\delta}{1+\beta}<2.04$, we run our approach which achieves $Y=(1+\frac{1+\delta}{1+\beta})$. Therefore the, competitive ratio bound becomes $\min\{X,Y\}=\min\{2.04,1+\frac{1+\delta}{1+\beta}\}.$ 

Our approach proceeds as follows.
Among the requests in $S$, Algorithm \ref{algorithm:line-metric} finds the two extreme positions $L$ and $R$ of the requests in $S$. 
We have two cases: (i) both $L$ and $R$ are on the same side on $\mathcal{L}$ from $pos_{OL}(t)$ (including $pos_{OL}(t)$), (ii) $L$ is on one side and $R$ is on another side on $\mathcal{L}$ from $pos_{OL}(t)$. For Case (i), tour $T$ is constructed connecting $pos_{OL}(t)$ with $L$ or $R$ whichever is farthest. For Case (ii), tour $T$ is constructed connecting $pos_{OL}(t)$ with first $L$ then $R$ if $dist(pos_{OL}(t),L)\leq dist(pos_{OL}(t),R)$, otherwise, tour $T$ is constructed connecting $pos_{OL}(t)$ with first $R$ then $L$. 
Server $s$ traverses the tour $T$ until either a new request arrives or $T$ is traversed. If no new request arrives during the traversal of $T$ and the server $s$ finished traversing $T$, server $s$ stays at its current position $pos_{OL}(t)$ which is the end point of $T$. 
Fig.~\ref{fig:singleserver-line}  illustrates these ideas.

    \begin{figure}[htbp]
    \centerline{
    \includegraphics[scale = 0.27]{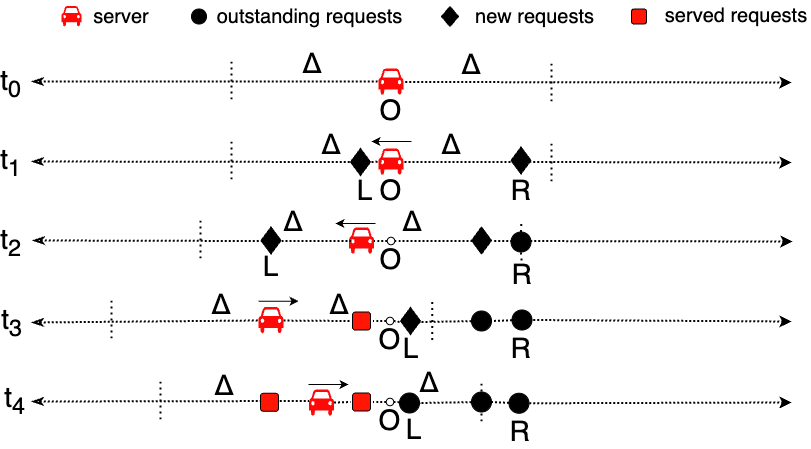}
    }
    \caption{An illustration of Algorithm \ref{algorithm:line-metric} for {\sc oTsp} on line. The $\Delta$ interval on the left and right of the current position of server shows how far from the current position new request would arrive. At $t_0$, server $s$ is on $o$, origin. At $t_1$, two requests arrive and $s$ picks left direction. At $t_2$, since $s$ finds new outstanding request enroute on left, it continues traversing left. At $t_3$, there is no outstanding request on the left and hence $s$ changes direction to right. At $t_4$, it is continuing serving requests enroute on the right direction.}
    \label{fig:singleserver-line}
    \vspace{-3mm}
    \end{figure}

\begin{algorithm}[t!]
\footnotesize
\begin{algorithmic}[1]
\STATE $o\leftarrow$ origin where server $s$ resides initially \\
\STATE $pos_{OL}(t)\leftarrow$ the current position of server $s$ on line $\mathcal{L}$ at time $t\geq 0$; $pos_{OL}(t=0)=o$\\
\IF{($1+\frac{1+\delta}{1+\beta}\geq 2.04$)}
\STATE Run the original model algorithm of \cite{Bjelde20}
\ELSE
\IF{new request(s) arrives}
\STATE $S\leftarrow $ the set of outstanding requests including the new request(s)\\
\STATE $L\leftarrow$ the farthest position $e_L$ among the requests in $S$ on $\mathcal{L}$\\
\STATE $R\leftarrow$ the farthest position $e_R$ among the requests in $S$ on $\mathcal{L}$ on the opposite side of $L$\\
\IF{both $L$ and $R$ are on the same side on $L$ from $pos_{OL}(t)$, including  $pos_{OL}(t)$}

\STATE $T\leftarrow$ a tour connecting $pos_{OL}(t)$ with $L$ or $R$ whichever is farthest 

\ELSE

\IF{$dist(pos_{OL}(t),L)\leq dist(pos_{OL}(t),R)$}
\STATE $T\leftarrow$ a tour connecting $pos_{OL}(t)$ with $L$ and then $L$ with $R$
\ELSE
\STATE $T\leftarrow$ a tour connecting $pos_{OL}(t)$ with $R$ and then $R$ with $L$
\ENDIF
\ENDIF
\STATE server $s$ traverses $T$ until a new request arrives or $T$ is traversed;  if $T$ is traversed, then stay at $pos_{OL}(t)$ \\ 
\ENDIF
\ENDIF

\caption{Single-server algorithm for nomadic {\sc oTsp} on line metric $\mathcal{L}$ with spatial locality $\Delta$}
 \label{algorithm:line-metric}
\end{algorithmic}
\end{algorithm}

\vspace{2mm}
\noindent{\bf Analysis of the Algorithm.} 
Let $L',R'$ be the two extreme positions of the requests in  $\sigma\cup \{o\}$, meaning that all $m$  requests and origin $o$ have positions between $L'$ and $R'$ on $\mathcal{L}$ (inclusive). 
Let $\mathbb{L}=|o-L'|$ and $\mathbb{R}=|o-R'|$. We have that the length of the line segment $D=\mathbb{L}+\mathbb{R}=\min\{\mathbb{L},\mathbb{R}\}+\max\{\mathbb{L},\mathbb{R}\}$. 
Let $t_{max}$ be the arrival time of the request $r_{max}=(t_{max},e_{max})$ in $\sigma$ released last, i.e., there is no other request $r'=(t',e')$ with $t'>t_{max}$. Let $t_{min}$ be the arrival time of the request $r_{min}=(t_{min},e_{min})$ in $\sigma$ released first, i.e., there is no other request $r''=(t'',e'')$ with $t''<t_{min}$.
%
%
%
%
%
We first establish correctness.

\begin{lemma}
\label{lemma:correctness-line}
Algorithm \ref{algorithm:line-metric} serves all the requests in $\sigma$. 
\end{lemma}
\begin{proof}
We prove this by contradiction. Support a request $r_i=(e_i,t_i)$ is not served. It must be the case that server $s$ has not reached $e_i$. By construction, we have that $e_i$ is either on the left or right of the current position $pos_{OL}(t)$ of server $s$. As soon as a request arrives, $s$ moves towards that request. For two or more requests arriving at the same time, server $s$ moves toward one. The direction is changed as soon as there is no outstanding request enroute, which happens in the worst case the extreme end of $\mathcal{L}$. After change in direction, the server $s$ does not stop until all outstanding requests are served. Therefore, $e_i$ must be visited in the server $s$'s traversal during left or right direction, hence a contradiction. 
\end{proof}

We now establish the competitive ratio bound.
\begin{theorem}[{\bf nomadic {\sc oTsp} on line metric}]
\label{theorem:oTsp-line}
    Algorithm \ref{algorithm:line-metric} with spatial locality $\Delta=\delta D$ is $\min\{2.04,1+\frac{1+\delta}{1+\beta}\}$-competitive for nomadic {\sc oTsp} defined on an interval of length $D$, where $0\leq \delta\leq 1$ and $0\leq \beta=\frac{\min\{\mathbb{L},\mathbb{R}\}}{D}\leq \frac{1}{2}$. 
    
\end{theorem}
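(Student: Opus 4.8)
The plan is to split along the threshold test that opens Algorithm~\ref{algorithm:line-metric} and to analyze only the spatial branch. When $1+\frac{1+\delta}{1+\beta}\ge 2.04$ the server runs the algorithm of \cite{Bjelde20}, which is $2.04$-competitive, and since $\min\{2.04,1+\frac{1+\delta}{1+\beta}\}=2.04$ in this regime the claim is immediate. So I would concentrate on the case $1+\frac{1+\delta}{1+\beta}<2.04$, in which the spatial routine runs; all requests are served by Lemma~\ref{lemma:correctness-line}, so only the cost estimate remains.

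First I would record the two offline lower bounds. All requests lie in the interval $[L',R']$ of length $D$, with the origin splitting it into pieces of lengths $\mathbb{L}$ and $\mathbb{R}$, so the optimal open (nomadic) tour from $o$ visits the nearer end and then the farther end, giving $|\mathcal{T}|=\min\{\mathbb{L},\mathbb{R}\}+D=(1+\beta)D$. With the release-time bound this yields $OPT(\sigma)\ge\max\{t_{max},(1+\beta)D\}$. The reduction is then purely arithmetic: if I can establish the single estimate $OL(\sigma)\le t_{max}+(1+\delta)D$, then writing $(1+\delta)D=\frac{1+\delta}{1+\beta}(1+\beta)D$ and bounding $t_{max}\le OPT$ and $(1+\beta)D\le OPT$ termwise gives $OL(\sigma)\le OPT+\frac{1+\delta}{1+\beta}OPT=\bigl(1+\frac{1+\delta}{1+\beta}\bigr)OPT$.

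Thus everything reduces to proving $OL(\sigma)\le t_{max}+(1+\delta)D=t_{max}+D+\Delta$. Let $p=pos_{OL}(t_{max})$ be the server's position when the last request $r_{max}=(t_{max},e_{max})$ arrives, and let $[a,b]$ be the span of the then-outstanding requests. After $t_{max}$ no request arrives, so the server just executes its replanned tour, whose length on the line is $OL(\sigma)=t_{max}+(b-a)+\min\{|p-a|,|p-b|\}$. Since $b-a\le D$, it suffices to prove the near-extreme estimate $\min\{|p-a|,|p-b|\}\le\Delta$.

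Proving this near-extreme estimate is the heart of the matter and the step I expect to be hardest. Spatial locality (Definition~\ref{definition:spatiallocality}) gives $|e_{max}-p|\le\Delta$, so if $r_{max}$ is itself an outstanding extreme the estimate is immediate; the only dangerous configuration is that $r_{max}$ is interior while outstanding requests sit on both sides of $p$ each farther than $\Delta$, i.e. $a<p-\Delta$ and $b>p+\Delta$. I would rule this out by contradiction using the nearer-extreme-first rule of Case~(ii). For such endpoints to remain outstanding, the server must have been within $\Delta$ of each at its release time yet never have reached it; taking (w.l.o.g.) the later-released endpoint, say $b$, the server is within $\Delta$ of $b$ at the release time $\tau_b$ while the opposite outstanding extreme $a$ is still more than $\Delta$ away, so the two extremes lie on opposite sides of the server and Case~(ii) sends it first to the nearer one, namely $b$. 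Hence $b$ would be served before the server could turn toward $a$, contradicting that $b$ is outstanding at $t_{max}$. The subtle points to get right are verifying that the server is genuinely moving away from $b$ when it exits the $\Delta$-neighborhood of $b$ (so that no farther request on the $b$-side drags it back through $b$), that simultaneous and post-idle arrivals preserve the argument, and that the symmetric case where $a$ is released later is handled identically; granting these, $\min\{|p-a|,|p-b|\}\le\Delta$ follows and the theorem is complete.
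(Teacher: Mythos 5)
Your overall skeleton coincides with the paper's proof of Theorem~\ref{theorem:oTsp-line}: the same threshold split between the algorithm of \cite{Bjelde20} and the spatial branch, the same two offline bounds $OPT(\sigma)\ge t_{max}$ and $OPT(\sigma)\ge(1+\beta)D$, the same single estimate $OL(\sigma)\le t_{max}+D+\Delta$, and equivalent closing arithmetic (your termwise bounding is a cleaner rendering of the paper's two cases $t_{max}\ge(1+\beta)D$ and $t_{max}<(1+\beta)D$). Where you go beyond the paper is in recognizing that everything hinges on the near-extreme fact --- that at $t_{max}$ at least one outstanding extreme lies within $\Delta$ of the server --- and attempting to prove it; the paper disposes of this fact in a single sentence (``there is at least one side \dots which has all outstanding requests within $\Delta$''), an assertion that never addresses old outstanding requests.

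However, your proof of that fact has a genuine gap. The step ``Case~(ii) sends it first to the nearer one, namely $b$'' is false in general, because $b$ need not be the extreme of its own side at its release time $\tau_b$: an \emph{older} outstanding request $R_b>b$ may lie on $b$'s side at distance at least $q_b-L_b$ from the server (where $q_b=pos_{OL}(\tau_b)$ and $L_b\le a$ is the opposite extreme), in which case Algorithm~\ref{algorithm:line-metric} finds the $a$-side extreme nearer and sends the server \emph{away} from $b$ --- so $b$ is never approached and no contradiction arises at this point; this configuration is reachable by an adversary that alternately plants requests just inside $\Delta$ on each side. Closing this case requires an idea absent from your sketch: show that once the server heads toward the $a$-side with the $b$-side extreme farther and both beyond $\Delta$, no later arrival (each within $\Delta$ of the current position) can flip the comparison of the two extreme distances, so the server moves monotonically away from $b$'s side until $t_{max}$; hence $R_b$ is never served, remains outstanding at $t_{max}$, and by your choice of $b$ as the rightmost outstanding request at $t_{max}$ must equal $b$, giving $q_b-L_b\le R_b-q_b=b-q_b\le\Delta$ and contradicting $q_b-L_b>\Delta$. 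Note that this same replanning-stability invariant is also what legitimizes the case you do argue (the server, once sent toward $b$, cannot be turned around before reaching $b$, or else $t_{max}$ arrives while $p\ge q_b\ge b-\Delta$, again a contradiction); you flag it as a ``subtle point'' but never establish it. So your reduction and arithmetic are sound and the skeleton is completable, but as written the case analysis at $\tau_b$ misses a reachable configuration and the lemma both cases rest on is assumed rather than proved --- with the caveat that the paper's own proof does no better, since it simply asserts the claim you at least tried to prove.
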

\begin{proof}
Consider the input instance $\sigma$. Suppose all $m$ requests are released at $t=0$.
We have that $|\mathcal{T}|=2\min\{\mathbb{L},\mathbb{R}\}+\max\{\mathbb{L},\mathbb{R}\}=D+\min\{\mathbb{L},\mathbb{R}\}=(1+\beta) D$, since $\beta=\min\{\mathbb{L},\mathbb{R}\}/ D$.
Algorithm \ref{algorithm:line-metric} finishes serving requests in $\sigma$ in $|\mathcal{T}|$  time, i.e. $OL(\sigma)\leq |\mathcal{T}|$. Any optimal algorithm $OPT$ also needs at least $|\mathcal{T}|$ time, i.e., $OPT(\sigma)\geq |\mathcal{T}|$. Therefore,
Algorithm \ref{algorithm:line-metric} is 1-competitive. 

Therefore, we focus on the case of not all requests released at $t=0$. We first establish the lower bound. Consider $r_{max}=(t_{max},e_{max})$, the request in $\sigma$ released last, with $t_{max}>0$. 
Since $r_{max}$ cannot be served before $t_{max}$ by any algorithm,  $OPT(\sigma)\geq t_{max}$. Furthermore, $$OPT(\sigma)\geq |\mathcal{T}|\geq D+\min\{\mathbb{L},\mathbb{R}\}=(1+\beta)D$$ even when not all requests arrive at time $t=0$.  

We now establish the upper bound for Algorithm \ref{algorithm:line-metric}. Notice that we know  $\delta$ due to the spatial locality model and $\beta$ due to the location of origin in line. If Line 3 in Algorithm \ref{algorithm:line-metric} is true, i.e., $1+\frac{1+\delta}{1+\beta}\geq 2.04$,  then we run the algorithm of \cite{Bjelde20} and hence the upper bound analysis of \cite{Bjelde20} gives $X=2.04$ competitive ratio. 
However, if Line 3 is not true, i.e., $1+\frac{1+\delta}{1+\beta}<2.04$, we prove $Y=1+\frac{1+\delta}{1+\beta}$ competitive ratio, and hence overall competitive ratio is $\min\{X,Y\}$.
The proof is as follows: at $t_{max}$, 
the length of the tour $T$ computed by Algorithm \ref{algorithm:line-metric} cannot be larger than $\Delta+D$. 
Consider the configuration at $t_{max}$. Since the new requests must be released within $\Delta$ distance from $pos_{OL}(t)$, there is at least one side (left or right) from $pos_{OL}(t)$ which has all outstanding requests within $\Delta$ distance of $pos_{OL}(t)$. 
Therefore, after $t_{max}$, each outstanding request waits for at most $\Delta+D$ time units before being served. Therefore, $OL(\sigma)\leq t_{max}+\Delta+D.$

Combining the above results, 
$$\frac{OL(\sigma)}{OPT(\sigma)}\leq \frac{t_{max}+\Delta+D}{\max\{t_{max},(1+\beta)D\}}.$$

We have two cases: (a) $t_{max}\geq (1+\beta)D$ or (b) $t_{max}<(1+\beta)D$. 
For Case (a), $D\leq \frac{t_{max}}{1+\beta}$ and since $\Delta=\delta D$,
\begin{equation}
\begin{split}
OL(\sigma) & \leq  \left(1+\frac{\delta t_{max}}{(1+\beta)t_{max}} + \frac{t_{max}}{(1+\beta)t_{max}}\right) OPT(\sigma)\\
& =  (1+\frac{1+\delta}{1+\beta}) OPT(\sigma).
\end{split}
\end{equation}

For Case (b), replacing $t_{max}$ with $(1+\beta)D$, we obtain
\begin{equation}
\begin{split}
OL(\sigma) & \leq  \left(1+\frac{\delta D}{(1+\beta)D}+\frac{D}{ (1+\beta)D}\right) OPT(\sigma)\\
& =(1+\frac{1+\delta}{1+\beta}) OPT(\sigma).
\end{split}
\end{equation}
The ratio $OL(\sigma)/OPT(\sigma)=1+\frac{1+\delta}{1+\beta}$ is $Y$ and hence we obtain $\min\{X,Y\}$ competitive ratio for Algorithm \ref{algorithm:line-metric} on line metric. 
\end{proof}

\subsection{Algorithm on Arbitrary Metric}
\label{section:alg-am-otrp}
The pseudocode of the algorithm is given in Algorithm \ref{algorithm:arbitrary-metric}. Server $s$ is initially on $o$, the origin. In Algorithm \ref{algorithm:arbitrary-metric}, server $s$ serves the requests as follows. Let $pos_{OL}(t)$ be the current position of $s$ at time $t$; $pos_{OL}(0)=o$.  Whenever (at least) a new request arrives at time $t$, $s$ constructs a tour $T$ as follows. 
If $\delta\geq 0.41$, it runs the procedure similar to \cite{lipmann2003line} which achieves $X=2.41$ competitive ratio. If $\delta<0.41$, then our approach is run which gives competitive ratio $Y=2+\delta$. Therefore, the overall competitive ratio becomes $\min\{X,Y\}=\min\{2.41,2+\delta\}$. Our approach is  discussed below.
Let $S_t$ be the set of requests in $\sigma$ arrived until time $t$. 
Let $\mathcal{T}_t$ be the minimum cost TSP tour of the requests in $S_t\cup \{o\}$ with $o$ being the one end of the tour. Let $e$ be the position of the first outstanding request after $o$ in  $\mathcal{T}_t$. 
Let $\mathcal{T}_t(o,e)$ be the part of the tour $\mathcal{T}_t$ from origin $o$ to $e$. 
We have that $T:=(pos_{OL}(t),e) \cup \mathcal{T}_t\backslash \mathcal{T}_t(o,e)$, i.e., $T$ connects the current position $pos_{OL}(t)$  of server $s$ with $e$ and then    $\mathcal{T}_t$ without considering its sub-tour $\mathcal{T}_t(o,e)$ from origin $o$ to $e$.    
Server $s$ then traverses the tour $T$ starting from $pos_{OL}(t)$ until a new request arrives while traversing $T$ or until $T$ is traversed. If $T$ is traversed before any request arrives, server $s$ stops at the endpoint of $T$ until a new request arrives. If a new request arrives at some time $t'>t$ before finish traversing $T$, $s$ again computes $T$ at $t'$ as described above and traverses the tour $T$. Fig.~\ref{fig:singleserver-arbitrary} illustrates these ideas.

\begin{algorithm}[bt!]
\footnotesize
\begin{algorithmic}[1]
\STATE $o\leftarrow$ origin where server $s$ resides initially \\ 
\STATE $pos_{OL}(t)\leftarrow$ the current position of server $s$ on metric $\mathcal{M}$ at time $t\geq 0$; $pos_{OL}(0)=o$
\IF{$\delta=\Delta/D\geq 0.41$}
\STATE Run the original model algorithm of \cite{lipmann2003line}
\ELSE
\IF{new request(s) arrives at time $t$}
\STATE $S_t \leftarrow$ the set of requests in $\sigma$ arrived until $t$
\STATE $\mathcal{T}_t\leftarrow$ the minimum cost TSP tour that connects the positions of the requests in $S_t\cup\{o\}$ with $o$ being the one endpoint of the tour \\
\STATE $e\leftarrow$ position of the outstanding request in $S$ that comes immediately after $o$ in $\mathcal{T}_t$\\
\STATE $(pos_{OL}(t),e)\leftarrow$ the line segment that connects the current position $pos_{OL}(t)$ of server $s$ with $e$\\ 
\STATE $\mathcal{T}_t(o,e)\leftarrow$ the part of   tour $\mathcal{T}_t$ from $o$ to $e$
\STATE $T:= \{(pos_{OL}(t),e)\} \cup \mathcal{T}_t\backslash\mathcal{T}_t(o,e)$
\STATE server $s$ traverses tour $T$ until either it finishes or a new request arrives  
\ENDIF
\ENDIF
\caption{Single-server algorithm for nomadic {\sc oTsp} on arbitrary metric $\mathcal{M}$ with spatial locality $\Delta$}
 \label{algorithm:arbitrary-metric}
 \end{algorithmic}
\end{algorithm}

\vspace{2mm}

\begin{figure*}[htbp]
    \centerline{
    \includegraphics[scale = 0.25]{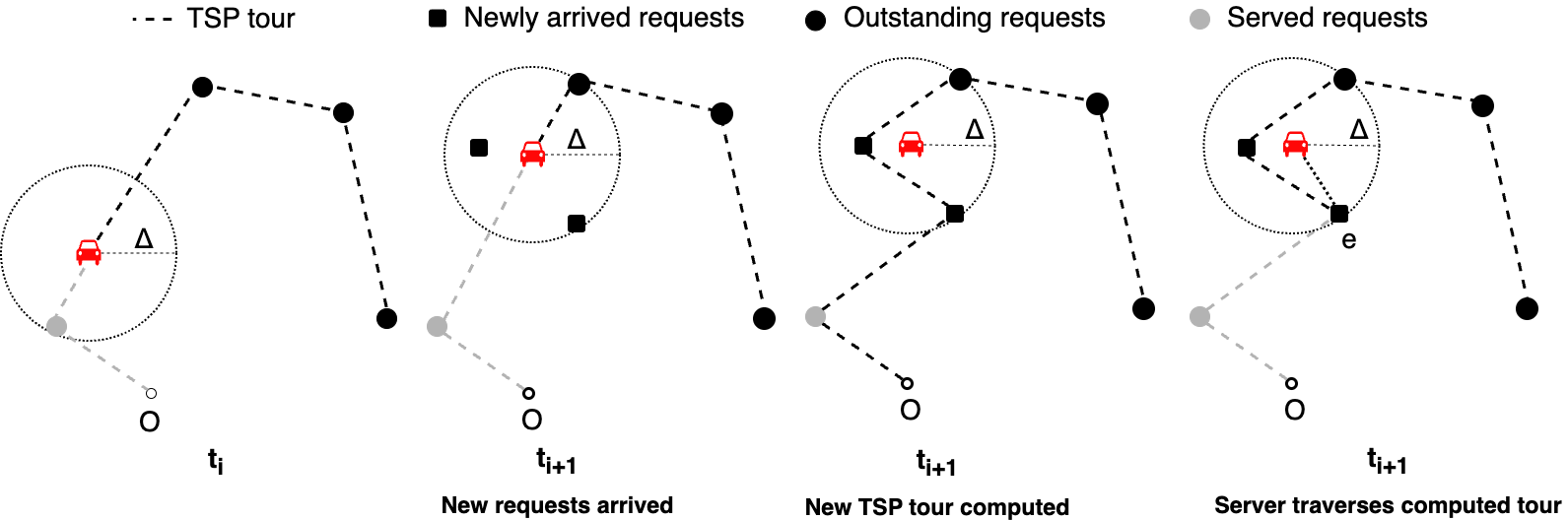}
    }
    \caption{An illustration of Algorithm \ref{algorithm:arbitrary-metric} for {\sc oTsp} on arbitrary metric.  ({\bf left}) Suppose at time $t_i$ the first set of requests arrive; server $s$ computes the TSP tour $\mathcal{T}_{i}$ of the requests and starts to traverse the tour starting from origin $o$ ({\bf second left}) While traversing the tour, at time $t_{i+1}>t_i$ two new requests arrived within spatial locality $\Delta$ of the current position $pos_{OL}(t_{i+1})$ of server ({\bf second right}) At $t_{i+1}$, server $s$ computes a new TSP tour $\mathcal{T}_{i+1}$ of all the requests arrived so far starting from origin $o$ ({\bf right}) server $s$ computes the part of tour $T$ (which connects its current position $pos_{OL}(t_{i+1})$ to the position of the first outstanding request $e$ in $\mathcal{T}_{i}$ from $o$ and not considering the part of $\mathcal{T}_{i+1}$ from $o$ to $e$) that needs to be traversed and starts to traverse $T$ from  $pos_{OL}(t_{i+1})$.} 
    \label{fig:singleserver-arbitrary}
    \vspace{-3mm}
    \end{figure*}

\noindent{\bf Analysis of the Algorithm.}
We first prove correctness. 

\begin{lemma}
\label{lemma:correctness-arbitrary}
Algorithm \ref{algorithm:arbitrary-metric} serves all the requests in $\sigma$. 
\end{lemma}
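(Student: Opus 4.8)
The plan is to argue by contradiction, exactly paralleling the line-metric correctness argument of Lemma~\ref{lemma:correctness-line}. Assume for contradiction that some request $r_i=(t_i,e_i)$ is never served by Algorithm~\ref{algorithm:arbitrary-metric}. Note first that when $\delta\geq 0.41$ the claim is inherited directly from the correctness of the algorithm of \cite{lipmann2003line}, so I may restrict attention to the case $\delta<0.41$, in which the tour-recomputation procedure is actually executed. The whole argument then rests on two facts: (i) every recomputed tour $T$ visits all currently outstanding requests, and (ii) the process terminates because $\sigma$ is finite.

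First I would establish the covering invariant: at any time $t$ at which a tour $T$ is (re)computed, $T$ visits every outstanding request in $S_t$. This is where the construction $T:=\{(pos_{OL}(t),e)\}\cup \mathcal{T}_t\backslash\mathcal{T}_t(o,e)$ must be examined with care, since it is tempting to worry that splicing off the prefix $\mathcal{T}_t(o,e)$ could discard an outstanding request. It cannot: $\mathcal{T}_t$ is a TSP tour over $S_t\cup\{o\}$, hence it passes through every request that has arrived so far; by definition $e$ is the \emph{first} outstanding request encountered after $o$ along $\mathcal{T}_t$, so the removed portion $\mathcal{T}_t(o,e)$ contains only $o$ and requests that are already served. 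Consequently every outstanding request lies on the retained suffix $\mathcal{T}_t\backslash\mathcal{T}_t(o,e)$, and the prepended segment $(pos_{OL}(t),e)$ simply reconnects the server to the start $e$ of that suffix. Thus if the server ever traverses a recomputed $T$ to completion, all requests outstanding at its computation time become served.

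Next I would handle termination. Since $|\sigma|=m$ is finite, let $t_{max}$ be the largest release time among the requests in $\sigma$. After $t_{max}$ no new request can arrive, so no recomputation is triggered afterwards. At the (re)computation performed at the moment of the last arrival $t_{max}$ (if the server is idle the tour is computed immediately; if it is mid-traversal it recomputes at $t_{max}$), the resulting tour $T$ is traversed without interruption and, by the covering invariant, serves every request outstanding at $t_{max}$. In particular it serves $r_i$, since by assumption $r_i$ was never served and is therefore still outstanding at $t_{max}\geq t_i$. This contradicts the assumption that $r_i$ is never served, completing the proof.

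I expect the covering invariant to be the only genuinely delicate step, specifically verifying that the splice $\{(pos_{OL}(t),e)\}\cup\mathcal{T}_t\backslash\mathcal{T}_t(o,e)$ never drops an outstanding request across a recomputation that interrupts an in-progress traversal; the termination argument is routine once finiteness of $\sigma$ is invoked.
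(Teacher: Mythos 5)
Your proof is correct and takes essentially the same approach as the paper's: a contradiction argument showing that any unserved request remains outstanding, that every recomputed tour $T$ visits all outstanding requests, and that the server traverses the final tour to completion so the request must be served. In fact, your explicit verification of the covering invariant (that splicing off $\mathcal{T}_t(o,e)$ discards only served requests, since $e$ is the \emph{first} outstanding request after $o$), your termination argument via $t_{max}$, and your handling of the $\delta\geq 0.41$ branch supply details that the paper's terse proof merely asserts.
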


\begin{proof}
    We prove this by contradiction. Suppose a request $r_i=(e_i,t_i)$ is not served. When $r_i$ arrives at $t_i$, it must be outstanding. At $t_i$, the server $s$ computes the TSP tour $T$ that visits the locations of all outstanding requests. Server $s$ does not stop until $T$ is fully traversed and there is no outstanding request. Since $r_i$ was outstanding at $t_i$ and after, it must have been served by server $s$ running Algorithm \ref{algorithm:arbitrary-metric}, a contraction.   
\end{proof}

We now establish the competitive ratio.
We start with the following lemma which is crucial in the proof.

\begin{lemma}
\label{lemma:delta}
Suppose (at least) a request $r_i=(e_i,t_i)$ arrives at time $t_i$. Let $S_i$ be the set of requests in $\sigma$ arrived so far (served and outstanding). Let $\mathcal{T}_i$ be the TSP tour connecting the locations of the requests in $S_i$ with $o$ being the one end of $\mathcal{T}_i$.  Let $e$ be the location of the first outstanding request after $o$ in $\mathcal{T}_i$. Let $S_\Delta$ be the set of outstanding requests within $\Delta$ radius from the current position $pos_{OL}(t_i)$ of server $s$.  It must be that $e$ is among the locations of the requests in $S_\Delta$. 
\end{lemma}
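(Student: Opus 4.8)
The plan is to observe that $e$ is outstanding by definition, so the whole statement reduces to the single inequality $dist(pos_{OL}(t_i),e)\le \Delta$: the first outstanding vertex of the recomputed tour must lie inside the $\Delta$-ball around the server. First I would record the two facts the argument has to combine. By the spatial locality model (Definition \ref{definition:spatiallocality}), every request released at $t_i$ has its source within $\Delta$ of $p:=pos_{OL}(t_i)$; I fix one such fresh request and call its location $f$, so that $dist(p,f)\le \Delta$ and $f$ is outstanding. By the choice of $e$ as the first outstanding location after $o$ along $\mathcal{T}_i$, every location strictly preceding $e$ on $\mathcal{T}_i$ has already been served, hence has been visited by $s$; in particular the server's trajectory from $o$ to $p$ runs through all of them.

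Then I would assume, for contradiction, that $dist(p,e)>\Delta\ge dist(p,f)$, so in particular $e\neq f$ and $f$ appears strictly after $e$ on $\mathcal{T}_i$. The core is an exchange argument against the minimality of $\mathcal{T}_i$: I modify $\mathcal{T}_i$ by relocating the closer outstanding request $f$ to the slot immediately before $e$ (right after the served prefix), splicing it out of its current slot. Using symmetry and the triangle inequality, deleting $f$ from its old slot does not increase the cost, while inserting it just before $e$ replaces the edge entering $e$ by a two-edge detour through $f$. The claim is that, because $f$ lies within $\Delta$ of the server (hence of the served frontier the tour reaches just before $e$) whereas $e$ lies more than $\Delta$ away, this rearrangement strictly shortens $\mathcal{T}_i$, contradicting its optimality.

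To make the insertion step quantitative I would relate $p$ to the last served location $u$ immediately preceding $e$ on $\mathcal{T}_i$ (or to $o$ when $e$ is the very first vertex), bound $dist(u,f)$ and $dist(u,e)$ through $p$ by the triangle inequality, and then compare the created edges $dist(u,f)+dist(f,e)$ against the destroyed edge $dist(u,e)$ together with the savings obtained from deleting the two edges formerly incident to $f$. Concluding $dist(p,e)\le \Delta$ places $e$ in $S_\Delta$, which is exactly the assertion.

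The step I expect to be the main obstacle is precisely locating the server $p$ relative to the combinatorial structure of $\mathcal{T}_i$. Since $e$ need not be one of the freshly released requests, a priori it could be an \emph{old} outstanding request sitting far from $p$, and the entire difficulty is to use minimality of $\mathcal{T}_i$ together with the locality constraint to exclude this. Working over arbitrary (non-linear) metrics, where ``between'' is meaningless and the served requests need not form a contiguous prefix of $\mathcal{T}_i$, is what makes the exchange delicate; I anticipate needing an auxiliary invariant, maintained inductively across the recomputation times, guaranteeing that the server never strands an outstanding request outside its current $\Delta$-ball, so that the served prefix of $\mathcal{T}_i$ indeed terminates within $\Delta$ of $p$ and the exchange yields a genuine decrease.
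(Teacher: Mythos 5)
Your reduction of the lemma to the single inequality $dist(pos_{OL}(t_i),e)\le\Delta$ is fine, but the core of your argument --- the exchange step --- does not go through, and the place where it breaks is exactly the obstacle you flag at the end. First, the quantitative claim is unfounded: deleting $f$ from its slot between neighbors $x,y$ saves $dist(x,f)+dist(f,y)-dist(x,y)\ge 0$, which can be zero (e.g., $f$ collinear between its neighbors on a line), while inserting $f$ between $u$ and $e$ costs $dist(u,f)+dist(f,e)-dist(u,e)\ge 0$, which can be strictly positive even when $u=p$ exactly. Take a line metric with $u=p$, $dist(p,f)=\Delta$, $dist(p,e)=\Delta+\epsilon$, and $f$ on the opposite side of $p$ from $e$: the insertion costs $2\Delta$ while the deletion may save nothing, so the rearranged tour is \emph{longer} and no contradiction with optimality arises. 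Nothing in the hypotheses ``$f$ is within $\Delta$ of $p$, $e$ is not'' controls the particular edges that the exchange creates and destroys.

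Second, and more fundamentally, no argument that uses only (a) optimality of $\mathcal{T}_i$ and (b) the existence of one fresh request $f$ within $\Delta$ of $p$ can prove the lemma, because that combined statement is false for an arbitrary server position: place $o=(0,0)$, outstanding requests at $E=(1,0)$ and $F=(10,0)$, server at $p=(10,0.1)$, and $\Delta=0.1$. The optimal tour from $o$ visits $E$ first, yet $dist(p,E)\gg\Delta$ while $F\in S_\Delta$. What rules such configurations out is not tour optimality but the algorithm's history: every request ever released appeared within $\Delta$ of where the server was at its release time, and the server's trajectory is itself built from previously computed tours. This is precisely the ``auxiliary invariant, maintained inductively across the recomputation times'' that you defer --- but that invariant is essentially the lemma itself, so deferring it makes the argument circular rather than incomplete in a repairable way. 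The paper's own proof proceeds differently: it compares $\mathcal{T}_{i-1}$ with $\mathcal{T}_i$ and performs a case analysis on whether served requests lie on the portion $\mathcal{T}_i(o,e)$, using the arrival constraint at time $t_i$ directly to conclude that $e$ is either the location of a newly released request (hence within $\Delta$ of $pos_{OL}(t_i)$ by definition) or an old outstanding request that the tour structure forces to be at least as close; it never relies on a shortening exchange, and given the counterexample above it could not.
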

\begin{proof}
Let $S_{i-1}$ be the set of requests in $\sigma$ (served and outstanding) arrived until time $t_{i-1}$.
Consider the TSP tour $\mathcal{T}_{i-1}$ that connects the locations of the requests in $S_{i-1}$ with $o$ being the one end of $\mathcal{T}_{i-1}$. Let $t_i>t_{i-1}$. Suppose (at least) a new request arrives at $t_i$ and no new request arrives after $t_{i-1}$ and before $t_i$.  Let $\mathcal{T}_i$ be the TSP tour connecting the locations of the requests in $S_i$ with $o$ being the one end of $\mathcal{T}_i$. 
Let $\mathcal{T}_i(o,e)$ be the part of $\mathcal{T}_i$ from origin $o$ to $e$ (the location of the first outstanding request in $\mathcal{T}_i$ after $o$). Let $S^{o,e}_{served}$ be the set of served requests in $\sigma$ that are on $\mathcal{T}_i(o,e)$, except $o$ and $e$.  

We consider two cases: (i) $S^{o,e}_{served}=\emptyset$ (ii) $S^{o,e}_{served}\neq \emptyset$. 
In Case (i), an outstanding request must be immediately after $o$ in  $\mathcal{T}_i$. Let $e'$ be the location of the first request from  $\mathcal{T}_{i-1}$ (served or outstanding).
Since $e'$ is not the first outstanding request in $\mathcal{T}_{i-1}$, it must be the case that $dist(o,e)<dist(o,e')$ with $e$ being the location of a request that arrived at time $t_i$. Since the locations of all the requests arrived at $t_i$ are within $\Delta$ distance from $pos_{OL}(t_i)$ of server $s$, $e$ must be among the locations of the requests in $S_\Delta$.

In Case (ii), there must be the case that for each $r'\in S^{o,e}_{served}$, $dist(o,e')<dist(o,e)$, since otherwise each $r'\in S^{o,e}_{served}$ would have been ordered in  $\mathcal{T}_i$ after $e$. We now show that $e$ must be among the locations of the requests in $S_\Delta$. Let $S^{arr}_i$ be the set of requests released at time $t_i$. We have that $S_i:=S_{i-1}\cup S^{arr}_i$. 
Recall that each request in $S^{arr}_i$ has location with in $\Delta$ from $pos_{OL}(t_i)$. 
We have two sub-cases for the location $e$ of the first outstanding request in $\mathcal{T}_i$ after $o$: (i) $e\in S_{i-1}$ (ii) $e\in S^{arr}_i$. In sub-case (i), $dist(pos_{OL}(t_i),e)\leq \Delta$ and $e$ must be the location of the request in $S_{i-1}$ such that it is ordered before any request in $S^{arr}_i$, otherwise, $e$ must be the location of a request in $S^{arr}_i$. In sub-case (ii), it is immediate that $dist(pos_{OL}(t_i),e)\leq \Delta$. Therefore, $e$ must be among the locations of the outstanding requests in $S_{\Delta}$, since $S_{\Delta}$ contains only the outstanding requests with in $\Delta$ radius of the current position  $pos_{OL}(t_i)$ of the server $s_i$ at time $t_i$.
\end{proof}

\begin{theorem}[{\bf nomadic {\sc oTsp} on arbitrary metric}]
\label{theorem:trp-arbitrary-ub}
    Algorithm \ref{algorithm:arbitrary-metric} with spatial locality $\Delta=\delta D$ is $\min\{2.41,2+\delta\}$-competitive for nomadic {\sc oTsp} on arbitrary metric with diameter $D$. 
\end{theorem}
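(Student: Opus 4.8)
The plan is to mirror the algorithm's own case split at the threshold $\delta = 0.41$. If $\delta \geq 0.41$, then Algorithm~\ref{algorithm:arbitrary-metric} (Line 3) runs the original-model algorithm of \cite{lipmann2003line}, whose analysis already yields $X = 2.41$; since $\delta \geq 0.41$ forces $2+\delta \geq 2.41$, we have $\min\{2.41,2+\delta\} = 2.41$ and this branch needs nothing new. The entire substance therefore lies in the branch $\delta < 0.41$, where the spatial-locality procedure runs and I must prove it is $(2+\delta)$-competitive; combined with $2+\delta < 2.41$ this gives $\min\{X,Y\} = 2+\delta$, as claimed.

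First I would reduce the whole execution to its final reconfiguration. Let $t_{max} = \max_i t_i$ be the release time of the last request. After $t_{max}$ no request arrives, so $S_{t_{max}} = \sigma$ and the tour $\mathcal{T}_{t_{max}}$ recomputed at $t_{max}$ is exactly the minimum TSP tour $\mathcal{T}$ rooted at $o$. At $t_{max}$ the server sits at $pos_{OL}(t_{max})$ and commits to $T = (pos_{OL}(t_{max}),e) \cup \big(\mathcal{T}_{t_{max}}\setminus\mathcal{T}_{t_{max}}(o,e)\big)$, which it then traverses undisturbed, serving every remaining outstanding request by Lemma~\ref{lemma:correctness-arbitrary}. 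Hence $OL(\sigma) \leq t_{max} + |T|$, and the whole problem is to bound $|T|$.

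The key step is exactly this bound. Writing $|T| = dist(pos_{OL}(t_{max}),e) + \big(|\mathcal{T}_{t_{max}}| - |\mathcal{T}_{t_{max}}(o,e)|\big)$, the detour term is at most $\Delta = \delta D$ by Lemma~\ref{lemma:delta}, since the first outstanding request $e$ lies within radius $\Delta$ of $pos_{OL}(t_{max})$; dropping the nonnegative sub-path $\mathcal{T}_{t_{max}}(o,e)$ leaves $|\mathcal{T}_{t_{max}}| = |\mathcal{T}|$. Thus $|T| \leq \delta D + |\mathcal{T}|$ and $OL(\sigma) \leq t_{max} + \delta D + |\mathcal{T}|$. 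I then invoke the two lower bounds $OPT(\sigma) \geq t_{max}$ and $OPT(\sigma) \geq |\mathcal{T}|$, together with $D \leq |\mathcal{T}|$ (a tour rooted at $o$ that visits all of $\sigma$ must reach the farthest point of $\sigma\cup\{o\}$, so its length is at least the spanned diameter $D$), which gives $\delta D \leq \delta|\mathcal{T}| \leq \delta\,OPT(\sigma)$. Combining,
\[
OL(\sigma) \leq t_{max} + \delta D + |\mathcal{T}| \leq OPT(\sigma) + \delta\,OPT(\sigma) + OPT(\sigma) = (2+\delta)\,OPT(\sigma),
\]
which finishes the $\delta < 0.41$ branch and hence the theorem. (One may equivalently split on whether $t_{max} \geq |\mathcal{T}|$, exactly as in the line-metric proof of Theorem~\ref{theorem:oTsp-line}, but the direct estimate above avoids the case analysis.)

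I expect the main obstacle to be the detour bound $dist(pos_{OL}(t_{max}),e) \leq \Delta$, i.e. the guarantee that re-routing from the server's current position to the first outstanding vertex of the freshly recomputed optimal tour costs at most $\Delta$. This is precisely the content of Lemma~\ref{lemma:delta}, and it is delicate because the recomputed tour $\mathcal{T}_{t_{max}}$ may reorder requests arbitrarily relative to the previous tour, so one must rule out that the first outstanding vertex is some far-away old request rather than a newly released (hence nearby) one. A secondary point to handle cleanly is $D \leq |\mathcal{T}|$, which I would justify from $D$ being the diameter of the region spanned by $\sigma\cup\{o\}$ (consistent with the line-metric convention $D = \mathbb{L}+\mathbb{R}$) rather than from unstated global structure of the ambient metric.
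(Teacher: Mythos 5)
Your proposal is correct and follows essentially the same route as the paper's proof: the same split at $\delta=0.41$ between the algorithm of \cite{lipmann2003line} and the spatial-locality procedure, the same key bound $OL(\sigma)\leq t_{max}+\delta D+|\mathcal{T}|$ via Lemma~\ref{lemma:delta}, and the same lower bounds $OPT(\sigma)\geq\max\{t_{max},|\mathcal{T}|\}$ with $|\mathcal{T}|\geq D$. The only difference is cosmetic: you combine the three estimates directly, whereas the paper splits into the cases $t_{max}\geq|\mathcal{T}|$ and $t_{max}<|\mathcal{T}|$, which yields the identical $(2+\delta)$ bound.
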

\begin{proof}
Consider the input instance $\sigma$. We know $\delta$. Therefore,
if $\delta\geq 0.41$, we run the algorithm of \cite{lipmann2003line} which provides $X=2.41$ competitive ratio. For $\delta<0.41$, we prove that our approach provides $Y=2+\delta$ competitive ratio. Combining these two bounds, we have overall competitive ratio $\min\{X,Y\}=\min\{2.41,2+\delta\}$. 

The analysis of our approach is  as follows to obtain $Y=2+\delta$ competitive ratio.
Let $\mathcal{T}$ be the minimum cost TSP tour connecting the locations of the requests in $\sigma\cup \{o\}$ with $o$ being the one end of the tour. 
%
%
Suppose (at least) a request $r_i=(e_i,t_i)$ arrives at time $t_i$
Let $S_i$ be the set of requests in $\sigma$ arrived until $t_i$. $S_i$ contains both served and outstanding requests. 
Let $\mathcal{T}_i$ be the TSP tour connecting the locations of the requests in $S_i$ with $o$ being the one end of $\mathcal{T}_i$.
We show that at $t_i$, the length of the tour $T$ computed by Algorithm \ref{algorithm:arbitrary-metric} to visit the outstanding requests in $S_i$ cannot be larger than $|\mathcal{T}_i|+\delta D$, i.e., $|T|\leq |\mathcal{T}_i|+\delta D$.
The proof is as follows. The location $e$ of the first outstanding request after $o$ in $\mathcal{T}_i$ is within distance $\Delta$ from $pos_{OL}(t_i)$ (Lemma \ref{lemma:delta}). 
Additionally, server $s$ does not need to traverse that part of $\mathcal{T}_i$ from $o$ to $e$. 
Therefore, the total tour length to traverse for $s$ is $T\leq dist(pos_{OL}(t_i),e)\cup \mathcal{T}_i\backslash \mathcal{T}_i(o,e)\leq \Delta + \mathcal{T}_i=\delta D+\mathcal{T}_i$.

Let $r_{max}=(t_{max},e_{max})$ be the request in $\sigma$ released last. Since no new request arrives after $t_{max}$, server $s$ will finish serving all requests within  $\delta D+|\mathcal{T}_{max}|$. 
We have that $|\mathcal{T}_{max}|\leq |\mathcal{T}|$.
Therefore, $OL(\sigma)\leq t_{max}+\delta D+|\mathcal{T}|.$
We have that $OPT(\sigma)\geq t_{max}$. Irrespective of whether all the requests are released at time $t\geq 0$,  $OPT(\sigma)\geq |\mathcal{T}|.$
Therefore,
$$\frac{OL(\sigma)}{OPT(\sigma)}\leq \frac{t_{max}+\delta D+|\mathcal{T}|}{\max\{t_{max},|\mathcal{T}|\}}.$$

Since the diameter is $D$, at least the location of two requests is $D$ apart and hence $|\mathcal{T}|\geq D$. 
We have two cases: (a) $t_{max}\geq |\mathcal{T}|$ or (b) $t_{max}< |\mathcal{T}|$. 
For Case (a),
$$OL(\sigma)\leq \left(\frac{2t_{max}}{t_{max}}+\frac{\delta t_{max}}{t_{max}}\right) OPT(\sigma)=(2+\delta) OPT(\sigma).$$

For Case (b), $$OL(\sigma)\leq \left(\frac{2|\mathcal{T}|}{|\mathcal{T}|}+\frac{\delta |\mathcal{T}|}{|\mathcal{T}|}\right) OPT(\sigma)=(2+\delta) OPT(\sigma).$$

The theorem follows combining the competitive ratio $X=2.41$ in \cite{lipmann2003line} for $\delta\geq 0.41$ and $Y=2+\delta$ for our approach for $\delta<0.41$.
\end{proof}

\section{Single-server Online DARP}
\label{section:darp}

We modify Algorithms \ref{algorithm:line-metric} and \ref{algorithm:arbitrary-metric} to solve {\sc oDarp} on line and aritrary metric, respectively. We consider nomadic {\sc oDarp}. For homing {\sc oDarp}, there is a 2-competitive algorithm in the original model \cite{AscheuerKR00,feuerstein2001line} which directly solves homing {\sc oDarp} in the spatial locality model in both line and arbitrary metric. 

The only change in Algorithms \ref{algorithm:line-metric} and \ref{algorithm:arbitrary-metric} to handle {\sc oDarp} is after visiting the source location of a request its destination location needs to be visited.  Furthermore, the spatial locality needs to be preserved only for the source location, i.e., the destination location for a request arriving at time $t$ can be more than $\Delta$ away from $pos_{OL}(t)$ of server $t$.

\begin{figure*}[!t]
    \centering
    \includegraphics[scale=0.25]{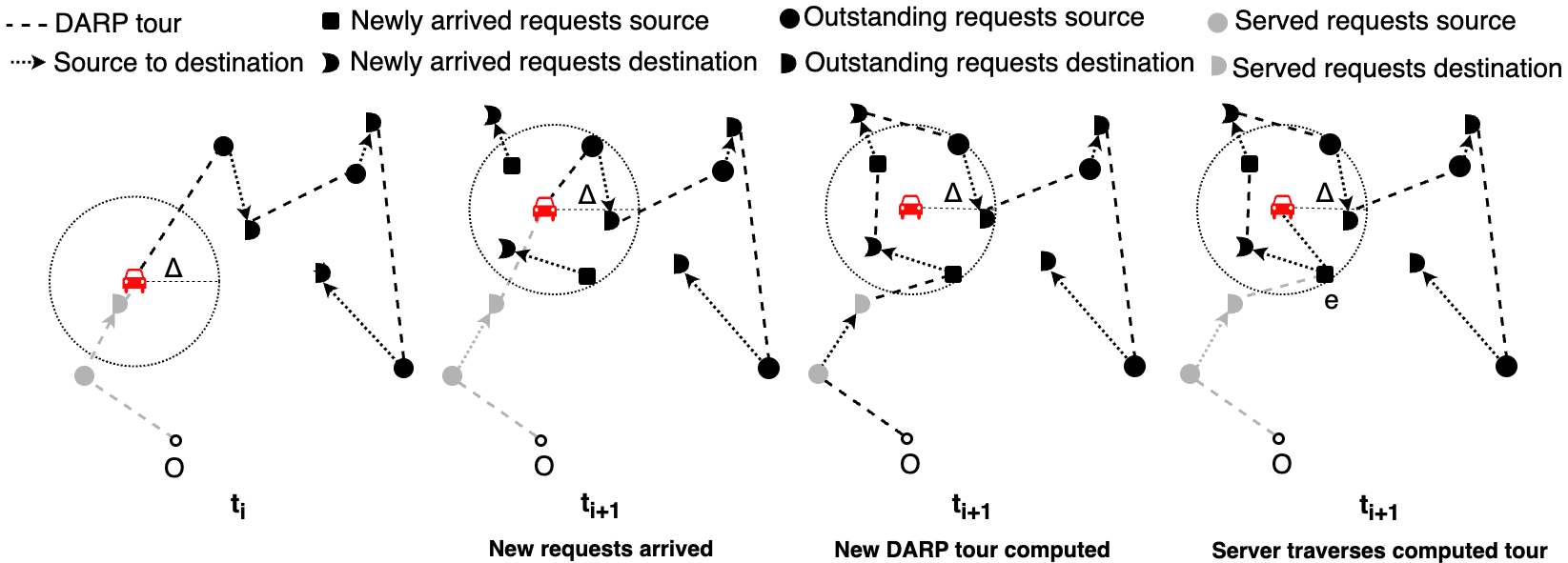}
    \caption{An illustration of Algorithm \ref{algorithm:arbitrary-metric} for {\sc oDarp} on arbitrary metric. The description is analogous to Fig.~\ref{fig:singleserver-arbitrary}.}
    \label{fig:Darp-Arbitrary}
    \vspace{-3mm}
\end{figure*}

\subsection{Algorithm on Line Metric}
\label{section:odarp-lm-1}
The correctness proof is analogous to Lemma \ref{lemma:correctness-line}. We prove the following theorem for the competitive ratio bound.
\begin{theorem}[{\bf nomadic {\sc oDarp} on line metric}]
\label{theorem:trp-line-ub}
    Algorithm \ref{algorithm:line-metric} with spatial locality $\Delta=\delta D$ is $\min\{2.457,1+\frac{1+\delta}{1+\beta}\}$-competitive for nomadic {\sc oDarp} defined on an interval of length $D$, where $0\leq \delta\leq 1$ and $0\leq \beta=\frac{\min\{\mathbb{L},\mathbb{R}\}}{D}\leq \frac{1}{2}$. 
\end{theorem}
\begin{proof}
Consider the input instance $\sigma$. Suppose all $m$ requests are released at $t=0$. Let $|\mathcal{T}|$ be the length of the optimal TSP tour that serves all the requests in $\sigma$. 
Algorithm \ref{algorithm:line-metric} finishes serving requests in $\sigma$ in $|\mathcal{T}|$  time, i.e. $OL(\sigma)\leq |\mathcal{T}|$. Any optimal algorithm $OPT$ also needs at least $|\mathcal{T}|$ time, i.e., $OPT(\sigma)\geq |\mathcal{T}|$. Therefore,
Algorithm \ref{algorithm:line-metric} is 1-competitive. 

Suppose not all requests are released at $t=0$. 
When $(1+\frac{1+\delta}{1+\beta})\geq 0.457$, we run the algorithm of \cite{BaligacsDSW23} and we obtain the $X=2.457$ competitive ratio. However, when $1+\frac{1+\delta}{1+\beta}< 0.457$, our approach is used which we analyze in the next paragraph, obtaining $Y=1+\frac{1+\delta}{1+\beta}$ competitive ratio. Therefore, the overall competitive ratio becomes $\min\{X,Y\}=\min\{2.457,1+\frac{1+\delta}{1+\beta}\}$. 

The analysis of our approach is involved since it needs to consider not just the source location but also the destination location and on what side destination location lies from the current position of server is impacts the analysis differently.

Consider the request released last $r_{max}=(t_{max},e_{max},d_{max})$. 
Server $s$ cannot start serving $r_{max}$ before $t_{max}$. Therefore, $OPT(\sigma)\geq t_{max}$. 
We now introduce the notion of increasing and non-increasing requests. Consider $\mathcal{L}$ with one end $L$ (denoting left) and another end $R$ (denoting right). Consider a request $r_i=(t_i,e_i,d_i)$ with $e_i$ being closer to $L$ than $d_i$ and let's call $r_i$ {\em increasing}.  If there is another request with  $d_i$ being closer to $L$ than $e_i$, then that request will be {\em non-increasing}.   The notation can be analogous in reference to $R$ as well.  

We now prove lower and upper bounds based on whether the requests in $\sigma$ are all increasing requests or there is at least one request that is non-increasing. 
If all requests are increasing, we prove that $OL(\sigma)\leq t_{max}+\Delta+D$. 
The proof is as follows. Let $\sigma_j$ be the set of request(s) released at $t_j$. We have that for any request $r_j\in \sigma_j$, $|pos_{OL}(t_j)-e_j|\leq \Delta$, i.e., the source location of each request in $\sigma_j$ is within $\Delta$ distance from the current location $pos_{OL}(t_j)$ of server $s$ at time $t_j$. Therefore, server $s$ can reach the farthest location of the requests in $\sigma_j$ at time at most $t_j+\Delta$. Since all the requests are increasing, and when no further requests arrive (i.e., after $r_{max}$), server $s$ finishes serving all the outstanding requests in next $D$ time steps. 
For the optimal cost, pick $L$ or $R$ which happens to be the source location $e_j$ of some  request $r_j$. Since all requests are increasing, server $s$ must first go to $L$ or $R$ (that is the source location) and then to $R$ or $L$ serving all the requests with minimum cost. Therefore, as in Theorem \ref{theorem:oTsp-line}, $OPT(\sigma)\geq D+\min\{\mathbb{L},\mathbb{R}\}\geq (1+\beta)D$. 

Combining the above results, 
$$\frac{OL(\sigma)}{OPT(\sigma)}\leq \frac{t_{max}+\Delta+D}{\max\{t_{max},(1+\beta)D\}}.$$

Analysis as in Theorem \ref{theorem:oTsp-line} gives 
$$OL(\sigma) =  (1+\frac{1+\delta}{1+\beta}) OPT(\sigma).$$

We now consider the case of $\sigma$ containing non-increasing requests. 
We first argue that the lower bound $OPT(\sigma)\geq D+\min\{\mathbb{L},\mathbb{R}\}\geq (1+\beta)D$ for the all increasing request case also applies to the case of $\sigma$ containing at least one non-increasing request. This is because both endpoints $L$ and $R$ must be visited by  server $s$ starting from origin $o$, which takes at least  $D+\min\{\mathbb{L},\mathbb{R}\}$, even when all requests in $\sigma$ are released at time $t=0$.

We now prove the upper bound  for Algorithm \ref{algorithm:line-metric} for the case of $\sigma$ containing non-increasing requests. Consider the current position $pos_{OL}(t_k)$ of server $s$ at any time $t_k\geq 0$ and let the direction of $s$ is toward $L$. Let $\sigma_k$ be the set of requests released at time $t_k$. We have that $|pos_{OL}(t_k)-e_k|\leq \Delta$, where $e_k$ is the source point of any request in $\sigma_k$. Server $s$ will reach the source point of any request in $\sigma_k$ at time $t_k+\Delta$. We have two situations for each request $r_k\in \sigma_k$:  (i) $d_k$ is toward $L$ (ii) $d_k$  is toward $R$. In Case (i), $s$ serves request $r_k$ in next $|e_k-d_k|$ time steps. Therefore, if no further request(s) arrive at time $>t_k$, then all the requests in $\sigma_k$ satisfying Case (i) will be served in $t_k+\Delta+ed^{max}_k$, where $ed^{max}_k$ is the maximum source-destination location distance of the requests in $\sigma_k$ satisfying Case (i). After that, the remaining outstanding requests (from $\sigma_k$ and all others) must have destination points toward $R$ and they will be served by $s$ in next $D$ time steps. 
Therefore, $OL(\sigma)\leq t_k+\Delta+ed^{max}_k+D.$ If we consider the request sets $\sigma_1,\ldots, \sigma_k$ up to $t_k$, we have $OL(\sigma)\leq \max_{\sigma_k\in \sigma} (t_k+ed^{max}_k)+\Delta+D.$

We now prove the lower bound  $OPT(\sigma)\geq \max_{\sigma_k\in \sigma} (t_k+ed^{max}_k)$. The argument is as follows. Consider the set $\sigma_k$ of requests released at $t_k$. Since $ed^{max}_k$ is the maximum source-destination location distance of the requests in $\sigma_k$, it must be traversed to finish serving the requests in $\sigma_k$ after $t_k$. 

Therefore, 
$$\frac{OL(\sigma)}{OPT(\sigma)}\leq \frac{\max_{\sigma_k\in \sigma} (t_k+ed^{max}_k)+\Delta+D}{\max\{\max_{\sigma_k\in \sigma} (t_k+ed^{max}_k),(1+\beta)D\}}.$$

Analysis as in Theorem \ref{theorem:oTsp-line} gives 
$$OL(\sigma) =  (1+\frac{1+\delta}{1+\beta}) OPT(\sigma).$$
The theorem follows combining the above bound $Y=1+\frac{1+\delta}{1+\beta}$ with the competitive ratio $X=2.457$ in \cite{BaligacsDSW23}. 
\end{proof}

\subsection{Algorithm on Arbitrary Metric}
\label{section:odarp-am-1}
Fig.~\ref{fig:Darp-Arbitrary} illustrates Algorithm \ref{algorithm:arbitrary-metric} for nomadic {\sc oDarp} in arbitrary metric.
The correctness proof is analogous to Lemma \ref{lemma:correctness-arbitrary}. We prove the following theorem for the competitive ratio bound.

\begin{theorem}[{\bf nomadic {\sc oDarp} on arbitrary metric}]
\label{theorem:oDarp}
    Algorithm \ref{algorithm:arbitrary-metric} with spatial locality $\Delta=\delta D$ is $\min\{2.457,2+\delta\}$-competitive for nomadic {\sc oDarp} on arbitrary metric with diameter $D$. 
\end{theorem}
\begin{proof}
Consider the input instance $\sigma$.  When $\delta\geq 0.457$, we again run the algorithm of \cite{BaligacsDSW23}, which gives $X=2.457$ competitive ratio. However, when $\delta< 0.457$, our analysis gives $Y=2+\delta$ competitive ratio. Therefore, the overall competitive ratio becomes $\min\{X,Y\}=\min\{2.457,2+\delta\}$. 

The analysis of our approach for $\delta<0.457$ is as follows. 
Let $r_{max}=(t_{max},e_{max})$ be the request in $\sigma$ released last. 
Let $\mathcal{T}_{Darp}$ be the minimum length tour for the requests in $\sigma$ such that one endpoint of $\mathcal{T}_{Darp}$ is origin $o$ and for each request $r_i$, its $e_i$ and $d_i$ come consecutively in $\mathcal{T}_{Darp}$ exactly once. 
At $t_{max}$, 
the length of the tour $T$ computed by Algorithm \ref{algorithm:arbitrary-metric} cannot be larger than $\delta D+|\mathcal{T}_{Darp}|$.
This is because we can prove Lemma \ref{lemma:delta} for the source locations. Moreover, the TSP tour computed whenever a new request arrives cannot be longer than   $|\mathcal{T}_{Darp}|$. Therefore, 
after $t_{max}$, each outstanding request is served before time $\delta D+ |\mathcal{T}_{Darp}|$, i.e.,  $OL(\sigma)\leq t_{max}+\delta D+|\mathcal{T}_{Darp}|.$
We have that $OPT(\sigma)\geq t_{max}$. Irrespective of whether all the requests are released at time $t\geq 0$,  $OPT(\sigma)\geq |\mathcal{T}_{Darp}|.$ Thus,

$$\frac{OL(\sigma)}{OPT(\sigma)}\leq \frac{t_{max}+\delta D+ |\mathcal{T}_{Darp}|}{\max\{t_{max},|\mathcal{T}_{Darp}|\}}.$$

Since the diameter is $D$ and at least the locations of two requests is $D$ apart, $|\mathcal{T}_{Darp}|\geq D$. 
We have two cases: (a) $t_{max}\geq |\mathcal{T}_{Darp}|$ or (b) $t_{max}< |\mathcal{T}_{Darp}|$. 
For Case (a),
$$OL(\sigma)\leq \left(\frac{2t_{max}}{t_{max}}+\frac{\delta t_{max}}{t_{max}}\right) OPT(\sigma)=(2+\delta) OPT(\sigma).$$

For Case (b), $$OL(\sigma)\leq \left(\frac{2|\mathcal{T}_{Darp}|}{|\mathcal{T}_{Darp}|}+\frac{\delta |\mathcal{T}_{Darp}|}{|\mathcal{T}_{Darp}|}\right) OPT(\sigma)=(2+\delta) OPT(\sigma).$$

The theorem follows combining our $Y=2+\delta$ bound with $X=2.457$ competitive bound in \cite{BaligacsDSW23}.
\end{proof}

\begin{figure*}[!t]
    \centering
    \includegraphics[scale=0.25]{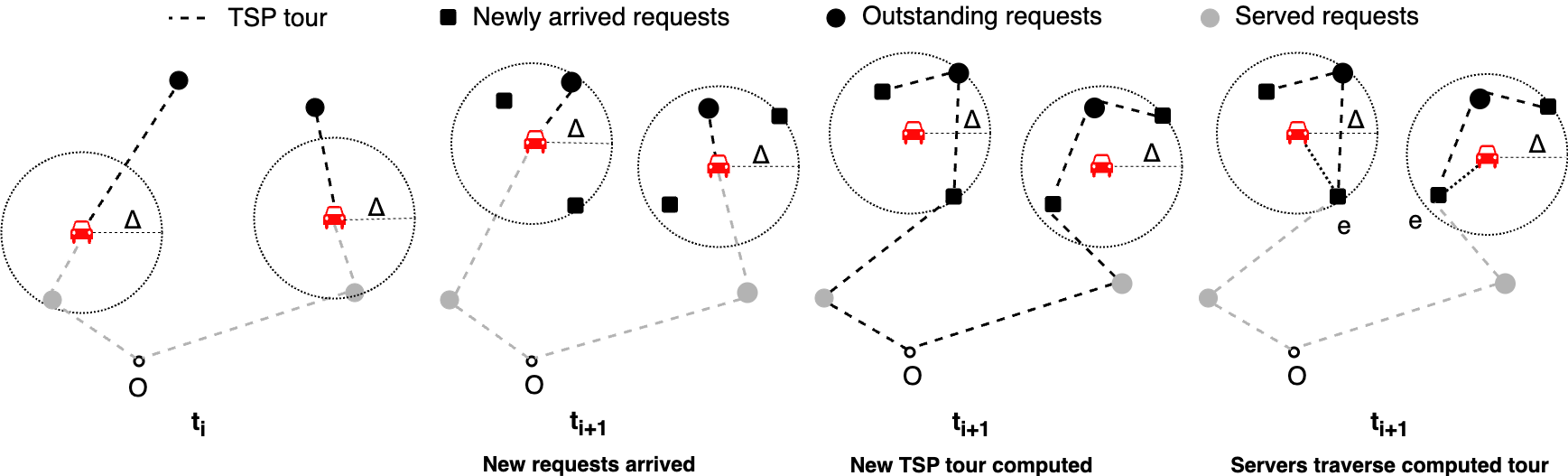}
    \caption{An illustration of parallelized Algorithm \ref{algorithm:arbitrary-metric}  nomadic {\sc oTsp} in arbitrary metric with 2 servers.}
    \label{fig:k-server-Arbitrary}
    \vspace{-3mm}
\end{figure*}

\section{$k>1$ Server Extensions}
\label{section:kserver}
We now discuss how the competitive ratios for $k=1$ server extend to $k>1$ servers for both {\sc oTsp} and {\sc oDarp}. Fig.~\ref{fig:k-server-Arbitrary} illustrates an example run for nomadic {\sc oTsp} in arbitrary metric with 2 servers. 

\begin{theorem}[{\bf nomadic {\sc oTsp} on line metric with $k>1$ servers}]
\label{theorem:tsp-line-k}
    Parallelized Algorithm \ref{algorithm:line-metric} with spatial locality $\Delta=\delta D$ is  $\min\{2.04,2+\frac{\delta}{\gamma}\}$-competitive for nomadic {\sc oTsp} defined on an interval of length $D$ for $k>1$ servers, where $\frac{1}{2}\leq \gamma=\frac{\max\{\mathbb{L},\mathbb{R}\}}{D}\leq 1$. 
\end{theorem}
\begin{proof}
Consider the input instance $\sigma$. Suppose all $m$ requests are released at $t=0$. Having $k\geq 2$ is enough for parallelized Algorithm \ref{algorithm:line-metric} to serve the requests in $\sigma$ with cost $OL(\sigma)\leq \max\{\mathbb{L},\mathbb{R}\}$. Any optimal algorithm $OPT$ also needs at least $\max\{\mathbb{L},\mathbb{R}\}$ time, i.e., $OPT(\sigma)\geq \max\{\mathbb{L},\mathbb{R}\}$. Therefore,
Algorithm \ref{algorithm:line-metric} is 1-competitive. 
Suppose not all requests are released at $t=0$.
Consider $r_{max}=(t_{max},e_{max})$, the request in $\sigma$ released last. 
Since $r_{max}$ cannot be served before $t_{max}$ by any online algorithm,  $OPT(\sigma)\geq t_{max}$.
At $t_{max}$, 
the tour $T$ computed by Algorithm \ref{algorithm:line-metric} for each server $s_j$ cannot be larger than $\Delta+\max\{\mathbb{L},\mathbb{R}\}$. 
Therefore, $OL(\sigma)\leq t_{max}+\Delta+\max\{\mathbf{L},\mathbb{R}\}.$

Combining the above results, 
$$\frac{OL(\sigma)}{OPT(\sigma)}\leq \frac{t_{max}+\Delta+\max\{\mathbb{L},\mathbb{R}\}}{\max\{t_{max},\max\{\mathbb{L},\mathbb{R}\}\}}.$$

Suppose $\gamma=\frac{\max\{\mathbb{L},\mathbb{R}\}}{D}$.
We have that,
$$OL(\sigma)\leq \left(2+\frac{\delta}{\gamma}\right) OPT(\sigma).$$

Instead of our approach above obtaining $Y=2+\frac{\delta}{\gamma}$, since $\delta$ and $\gamma$ are known, we opt to run the parallelized algorithm in \cite{bonifaci2009-TCS} if $(2+\frac{\delta}{\gamma})\geq 2.04$ to obtain $X=2.04$ competitive ratio instead.  Therefore, combining these two competitive bounds, we obtain $\min\{2.04, 2+\frac{\delta}{\gamma}\}$ competitive ratio.  
\end{proof}

\begin{theorem}[{\bf nomadic {\sc oTsp} on arbitrary metric with $k>1$ servers}]
\label{theorem:otsp-arbitrary-ub-k}
    Parallelized Algorithm \ref{algorithm:arbitrary-metric} with spatial locality $\Delta=\delta D$ is $\min\{2.41,2(1+\delta)\}$-competitive for nomadic  {\sc oTsp} on arbitrary metric with diameter $D$ for $k>1$ servers. 
\end{theorem}
\begin{proof}
Let $r_{max}=(t_{max},e_{max})$ be the request in $\sigma$ released last. At $t_{max}$, 
the length of the tour $T_j$ computed by Algorithm \ref{algorithm:arbitrary-metric} for server $s_j$ cannot be larger than $|\mathcal{T}_j|+\delta D$, i.e., $|T_j|\leq |\mathcal{T}_j|+\delta D$. This is because since the spatial locality is $\Delta=\delta D$, the first outstanding request in each tour $T_j$ cannot be more than $\Delta$ away from the current position of at least a server $s_j$ as in Lemma \ref{lemma:delta}.
After $t_{max}$, each outstanding request served by $s_j$ waits for at most $|\mathcal{T}_j|+\delta D$ time units before being served by $s_j$. Therefore, $OL(\sigma)\leq t_{max}+\max_{1\leq j\leq k}|\mathcal{T}_j|+\delta D.$
We have that $OPT(\sigma)\geq t_{max}$. Irrespective of whether all the requests are released at time $t\geq 0$, since $OPT$ must visit all the requests, it must pay at least $\max_{1\leq j\leq k}|\mathcal{T}_j|$, i.e., $OPT(\sigma)\geq \max_{1\leq j\leq k}|\mathcal{T}_j|.$

Combining the above results, 
$$\frac{OL(\sigma)}{OPT(\sigma)}\leq \frac{t_{max}+\max_{1\leq j\leq k}|\mathcal{T}_j|+\delta D}{\max\{t_{max},\max_{1\leq j\leq k}|\mathcal{T}_j|\}}.$$

Since the diameter is $D$, at least the location of two requests is $D$ apart and hence $\max_{1\leq j\leq k}|\mathcal{T}_j|\geq D/2$. 
We have two cases: (a) $t_{max}\geq \max_{1\leq j\leq k}|\mathcal{T}_j|$ or (b) $t_{max}< \max_{1\leq j\leq k}|\mathcal{T}_j|$. 
For both cases,
$$OL(\sigma)\leq (2+2\delta) OPT(\sigma)=2(1+\delta)OPT(\sigma).$$

As in Theorem \ref{theorem:tsp-line-k},  since $\delta$ is known, we opt to run the parallelized algorithm in \cite{bonifaci2009-TCS} if $2(1+\delta)\geq 2.41$ to obtain $X=2.41$ competitive ratio instead of our above competitive bound $Y=2(1+\delta)$. Therefore, combining these two competitive bounds $X$ and $Y$, we obtain $\min\{2.41,2(1+\delta)\}$ competitive ratio.
\end{proof}

\begin{theorem}[{\bf nomadic {\sc oDarp} on arbitrary metric with $k>1$ servers}]
\label{theorem:trp-arbitrary-ub}
    Parallelized Algorithm \ref{algorithm:arbitrary-metric} with spatial locality $\Delta=\delta D$ is $\min\{2.457,2(1+\delta)\}$-competitive for nomadic  {\sc oDarp} on any (arbitrary or line) metric with diameter $D$ for $k>1$ servers. 
\end{theorem}

\begin{proof}
Consider the input instance $\sigma$. 
The proof is analogous to Theorem \ref{theorem:otsp-arbitrary-ub-k}.
Let $r_{max}=(t_{max},e_{max})$ be the request in $\sigma$ released last. 
Let $\mathcal{T}^j_{Darp}$ be the minimum length tour server $s_j$ for the requests in $\sigma$ such that one endpoint of $\mathcal{T}^j_{Darp}$ is origin $o$ and for each request $r_i$, its $e_i$ and $d_i$ come consecutively in $\mathcal{T}^i_{Darp}$ exactly once. 
At $t_{max}$, 
the length of the tour $T_j$ computed by Algorithm \ref{algorithm:arbitrary-metric} for each server $j$ cannot be larger than $\delta D+|\mathcal{T}^j_{Darp}|$. This is because since the spatial locality is $\Delta=\delta D$, the first outstanding request in each tour $T_j$ cannot be more than $\Delta$ away from the current position of at least a server $s_j$ as in Lemma \ref{lemma:delta}. 
Therefore, $OL(\sigma)\leq t_{max}+\delta D+ \max_{1\leq j\leq k}|\mathcal{T}^j_{Darp}|.$
We have that $OPT(\sigma)\geq \max\{t_{max},\max_{1\leq j\leq k}|\mathcal{T}^j_{Darp}|\}$.

Combining the above results, 
$$\frac{OL(\sigma)}{OPT(\sigma)}\leq \frac{t_{max}+\delta D+\max_{1\leq j\leq k}|\mathcal{T}^j_{Darp}|}{\max\{t_{max},\max_{1\leq j\leq k}|\mathcal{T}^j_{Darp}|\}}.$$

As in Theorem \ref{theorem:otsp-arbitrary-ub-k}, $\max_{1\leq j\leq k}|\mathcal{T}^j_{Darp}|\geq D/2$. 
Therefore, we obtain $OL(\sigma)\leq 2(1+\delta) OPT(\sigma).$

As in Theorem \ref{theorem:tsp-line-k},  since $\delta$ is known, we opt to run algorithm in \cite{BaligacsDSW23} if $2(1+\delta)\geq 2.457$ to obtain $X=2.457$ competitive ratio instead for $k>1$ servers. Our analysis  $2(1+\delta)<2.457$ gives $Y=2(1+\delta)$ competitive ratio. Therefore, combining these two competitive bounds $X$ and $Y$, we obtain $\min\{2.457,2(1+\delta)\}$ competitive ratio.
\end{proof}

\section{Concluding Remarks}
\label{section:conclusion}
In this paper, we have proposed the new clairvoyance model, called spatial locality, and studied its power on online routing via two fundamental problems  {\sc oTsp} and {\sc oDarp} that have relevant applications in logistics and robotics with $k\geq 1$ servers. 
We proved the simple competitive  ratio of $1+\delta, 0\leq \delta\leq 1,$ for nomadic {\sc oTsp} and {\sc oDarp} in case of sequential arrival of requests.
For any arrival (not necessarily sequential), 
we first established a lower bound of 2-competitive ratio for  both homing and nomadic versions of {\sc oTsp} and {\sc oDarp} in arbitrary metric. 
We then showed that in both arbitrary and line metric the competitive ratios better than the currently known bounds in the original non-clairvoyant model can be obtained when spatial locality is small (for arbitrary metric, $\delta<0.41$). For future work, it will be interesting to consider other online  problems and/or other objective functions such as minimizing sum of completion times.  It will also be interesting to run experimental evaluations against real-world routing traces collected from different applications in logistics and robotics, such as Uber/Lyft, to see how our algorithms and their bounds translate to practice under varying amount of spatial locality and request arrival timings. Finally, it will be interesting to consider asymmetric cost, heterogeneous vehicles, etc., and design competitive algorithms.


\bibliographystyle{named}
\bibliography{references}

\end{document}